\newtheorem{default}{}
\newtheorem{corollary}[default]{Corollary}
\newtheorem{definition}[default]{Definition}
\newtheorem{example}[default]{Example}
\newtheorem{exer-hard}[default]{*Exercise}
\newtheorem{lemma}[default]{Lemma}
\newtheorem{proposition}[default]{Proposition}
\newtheorem{remark}[default]{Remark}
\newtheorem{theorem}[default]{Theorem}
\def\Notation{{\medskip\noindent{\bf Notation~}}}
\def\qed{{\hfill  $\Box$}}
\newcommand\SHORTVERSION[2]{
	\ifthenelse{\boolean{SHORT}}{#1}{#2}
}
\newenvironment{DEF}{\begin{definition}\bgroup\rm}{\egroup\end{definition}}
\newenvironment{cor}{\begin{corollary}}{\end{corollary}}
\newenvironment{lem}{\begin{lemma}}{\end{lemma}}
\newenvironment{Thm}{\begin{theorem}}{\end{theorem}}
\def\true{1}
\def\NN{\mathbb N}   
\def\partialto{\rightharpoonup}
\def\oraclefun#1#2{\beta_{{#1},{#2}}}
\def\dom#1{\mathrm{dom}({#1})}
\def\rng#1{\mathrm{rng}({#1})}
\def\truefalse{\{0,1\}}
\def\Afun#1{\alpha_{#1}}
\def\Osection#1#2{{#2}^{[{#1}]}}
\def\Itlang#1#2{\mathcal L^{#1}_{#2}}
\begin{document}

\title{Relativizing Small Complexity Classes and their Theories\footnote{A preliminary version of this paper appeared as \cite{ACN:csl}.}}
\author{Klaus Aehlig\thanks{supported by DFG grant Ae 102/1-1}
        \and Stephen Cook\thanks{supported by NSERC}
\and Phuong Nguyen\thanks{supported by NSERC}}
\date{\today}

\maketitle
\thispagestyle{empty}

\begin{abstract}

Existing definitions of the relativizations of \NCOne, \L\ and \NL\
do not preserve the inclusions
$\NCOne \subseteq \L$, $\NL\subseteq \ACOne$.
We start by giving the first definitions that preserve them.
Here for \L\ and \NL\ we define their relativizations using Wilson's stack oracle model,
but limit the height of the stack to a constant (instead of $\log(n)$).
We show that the collapse of any two classes in
$\{\ACZm, \TCZ, \NCOne, \L, \NL\}$
implies the collapse of their relativizations.
Next we exhibit an oracle $\alpha$ that makes $\ACk(\alpha)$
a proper hierarchy.  This strengthens and clarifies the separations
of the relativized theories in [Takeuti, 1995].
The idea is that a circuit whose nested depth of oracle gates is bounded by $k$
cannot compute correctly the $(k+1)$ compositions of every oracle function.
Finally we develop theories that characterize the relativizations
of subclasses of \Ptime\ by modifying theories previously defined by
the second two authors.  A function is provably total in a theory iff
it is in the corresponding relativized class, and hence the oracle
separations imply separations for the relativized theories.

\end{abstract}

\section{Introduction}
\label{s:intro}

Oracles that separate \Ptime\ from \NP\ and oracles that
collapse \NP\ to \Ptime\ have both been constructed.
This rules out the possibility of proofs of the separation or collapse of 
\Ptime\ and \NP\ by methods that relativize.
However, similar results have not been established for subclasses of \Ptime\
such as \L\ and \NL.
Indeed, prior to this work there has not been a satisfying definition of
the relativized version of \NL\ that preserves simultaneously the
inclusions.

\begin{equation}\label{e:inclusions}
\NCOne \subseteq \L  \subseteq \NL \subseteq \ACOne
\end{equation}
(In this paper \NCk\ and \ACk\ refer to their uniform versions.)
For example~\cite{Ladner:Lynch:76} if the Turing machines are allowed to be nondeterministic when
writing oracle queries, then there is an oracle $\alpha$ so that
$\NL(\alpha) \not\subseteq \Ptime(\alpha)$.
Later definitions of $\NL(\alpha)$ adopt the requirement specified in \cite{Ruzzo:Simon:Tompa}
that the nondeterministic oracle machines be deterministic
whenever the oracle tape (or oracle stack) is nonempty.
Then the inclusion $\NL(\alpha) \subseteq \Ptime(\alpha)$ relativizes,
but not all inclusions in \eqref{e:inclusions}.

Because the nesting depth of oracle gates in an oracle \NCOne\ circuit
can be bigger than one,
the model of relativization that preserves the inclusion $\NCOne \subseteq \L$
must allow an oracle logspace Turing machine to have access to more than one
oracle query tape \cite{Orponen:83,Buss:86:sctc,Wilson:88:jcss}.
For the model defined by Wilson~\cite{Wilson:88:jcss}, the partially constructed
oracle queries are stored in a stack.
The machine can write queries only on the oracle tape at the top of the stack.
It can start a new query on an empty oracle tape (thus {\em pushing} down the
current oracle tape, if there is any),
or query the content of the top tape which then becomes empty and the stack is {\em popped}.

Following Cook~\cite{Cook:85:information-control}, the circuits accepting languages
in relativized \NCOne\ are those with logarithmic depth where the Boolean gates
have bounded fanin and an oracle gate of $m$ inputs contributes $\log(m)$ to
the depths of its parents.
Then in order to relativize the inclusion $\NCOne \subseteq \L$,
the oracle logspace machines defined by Wilson~\cite{Wilson:88:jcss}
are required to satisfy the condition that at any time,
$$\sum_{i=1}^k\max\{\log(|q_i|),1\} = \bigO(\log(n))$$
where $q_1, q_2, \ldots, q_k$ are the contents of the stack and $|q_i|$ are their lengths.
For the simulation of an oracle \NCOne\ circuit by such an oracle logspace machine
the upper bound $\bigO(\log(n))$ cannot be improved. 

Although the above definition of $\L(\alpha)$ (and $\NL(\alpha)$) ensures that $\NCOne(\alpha) \subseteq \L(\alpha)$,
unfortunately we know only that $\NL(\alpha) \subseteq \AC^2(\alpha)$ \cite{Wilson:88:jcss};
the inclusion $\NL(\alpha) \subseteq \ACOne(\alpha)$ is left open.

We observe that if the height of the oracle stack is bounded by a constant
(while the lengths of the queries are still bounded by a polynomial in the length of the inputs),
then an oracle \NL\ machine can be simulated by an oracle \ACOne\ circuit,
i.e., $\NL(\alpha) \subseteq \ACOne(\alpha)$.
In fact, it can then be shown that $\NL(\alpha)$ is contained in the $\ACZ(\alpha)$ closure of the Reachability problem
for directed graphs,
while $\L(\alpha)$ equals the $\ACZ(\alpha)$ closure of the Reachability problem
for directed graphs whose outdegree is at most one.

The \ACZalpha\ closure of the Boolean Sentence Value problem (which is \ACZ\ complete for \NCOne)
turns out to be the languages computable by uniform oracle \NCOne\ circuits (defined as before)
where the nesting depth of oracle gates is now bounded by a constant.
We redefine $\NCOne(\alpha)$ using this new restriction on the oracle gates;
the new definition is more suitable in the context of $\ACZ(\alpha)$ reducibility
(the previous definition of \NCOnealpha\ seems suitable when one considers $\NCOne(\alpha)$ reducibility).
Consequently, we obtain the first definition of $\NCOne(\alpha)$, $\L(\alpha)$ and $\NL(\alpha)$
that preserves the inclusions in \eqref{e:inclusions}.

Furthermore, the \ACZ-complete problems for \NCOne, \L, and \NL\
(as well as \ACZm, \TCZ) become $\ACZalpha$-complete for the corresponding relativized classes.
Therefore the existence of any oracle that separates two of the mentioned classes
implies the separation of the respective nonrelativized classes.
(If the non-relativised classes would be equal, their complete
problems would be equivalent under \ACZ-reductions, hence even
more under $\ACZalpha$-reductions and therefore the relativised
classes would coincide as well.)
Separating the relativized classes
is as hard as separating their nonrelativized counterparts.
This nicely generalizes known results~\cite{Wilson:88:jcss,Simon:77:thesis,Wilson:87:mst}. 

On the other hand, oracles that separate the classes \ACk\ (for $k = 0,1,2,\ldots$) and \Ptime\
have been constructed \cite{Wilson:87:mst}.
Here we prove a sharp separation between relativized circuit classes
whose nesting depths
of oracle gates differ by one.
More precisely, we show that a family of uniform circuits with nesting depth of oracle gates bounded by $k$
cannot compute correctly the $(k+1)$ iterated compositions
\begin{equation}\label{e:i-composition}
f(f(\,\ldots\,f(0)\,\ldots\,))
\end{equation}
for all oracle function $f$.
(Clearly \eqref{e:i-composition} can be computed correctly by a circuit with oracle gates
having nesting depth $(k+1)$.)
As a result, there is an oracle $\alpha$ such that
\begin{equation}\label{e:separation}
\NL(\alpha)\subsetneq \ACOne(\alpha) \subsetneq \AC^2(\alpha) \subsetneq \ldots \subsetneq \Ptime(\alpha)
\end{equation}

The idea of using \eqref{e:i-composition} to separate relativized circuit classes
is already present in the work of Takeuti~\cite{Takeuti:95:apal} where it is used to separate
the relativized versions of first-order theories $TLS(\alpha)$ and $TAC^1(\alpha)$.
Here $TLS$ and $TAC$ are (single sorted) theories associated with \L\ and \ACOne, respectively.
Thus with simplified arguments we strengthen his results.

Finally, building up from the work of the second two authors \cite{Cook:Nguyen,Nguyen:Cook:05:lmcs}
we develop relativized two-sorted theories that are associated with the newly defined classes
$\NCOnealpha, \L(\alpha), \NL(\alpha)$ as well as other relativized circuit classes.

The paper is organized as follows.
In Section \ref{s:class} we define the relativized classes
and prove the inclusions mentioned above.
An oracle that separates classes in \eqref{e:separation} is shown in Section \ref{s:sep}.
In Section \ref{s:theory} we define the associated theories and show
their separation using the oracle defined in Section \ref{s:sep}.

\section{Small Relativized Classes}
\label{s:class}

\subsection{Relativized Circuit Classes} \label{s:cir-class}

Throughout this paper, $\alpha$ denotes a unary relation on binary strings.

A problem is in \ACk\ if it can be solved by a polynomial size family
of Boolean circuits whose depth is
bounded by $\bigO((\log{n})^k)$ ($n$ is the number of input bits),
where $\wedge$ and $\vee$ gates are allowed unbounded fanin.
The relativized class $\ACkalpha$ generalizes this by allowing,
in addition to (unbounded fanin) Boolean gates ($\neg, \wedge, \vee$),
oracle gates
that output 1 if and only if the inputs to the gates (viewed as binary strings)
belong to $\alpha$ (these gates are also called $\alpha$ gates).

In this paper we always require circuit families to be
{\em uniform}.  Our default definition of uniform is DLOGTIME, a robust
notion of uniformity that has a number of equivalent definitions
\cite{Barrington:Immerman:Straubing:90,Immerman:99:book}. 
In particular, a language $L\subseteq \{0,1\}^*$
is in (uniform) \ACZ\ iff it represents the set
of finite models $\{1,\ldots,n\}$ of some fixed first-order
formula with an uninterpreted
unary predicate symbol and ternary predicates which are interpreted
as addition and multiplication.

Here an \ACZ\ reduction refers to a `Turing' style reduction.
Thus a problem $A$ is \ACZ\ reducible to a problem $B$ if there
is a uniform polynomial size constant depth family of circuits
computing $A$, where the circuits are allowed to have oracle gates
for $B$, as well as Boolean gates.

Recall that \TCZ\ (resp. $\ACZ(m)$) is defined in the same way as \ACZ, except
the circuits allow unbounded fanin threshold (resp. $\bmod m$) gates.

\begin{definition}[\ACkalpha, $\ACZ(m,\alpha)$, \TCZalpha]
\label{d:small_classes}
For $k \ge 0$, the class \ACkalpha\ (resp. $\ACZ(m,\alpha)$, \TCZalpha) is defined
as uniform \ACk\ (resp. $\ACZ(m)$, \TCZ) except that
unbounded fan-in $\alpha$ gates are allowed.
\end{definition}

The class \NCk\ is the subclass of \ACk\ defined by restricting
the $\wedge$ and $\vee$ gates to have fanin $2$.
Defining \NCkalpha\ is more complicated.
In \cite{Cook:85:information-control} the depth of an oracle gate
with $m$ inputs is defined to be $\log(m)$.
A circuit is an \NCkalpha-circuit provided that
it has polynomial size and the total depth of all gates 
along any path from the output gate to an input gate is $\bigO((\log{n})^k)$.
Note that if there is a mix of large and small oracle gates,
the nested depth of oracle gates may not be $\bigO((\log{n})^{k-1})$.

Here we restrict the definition further, requiring that the nested depth of
oracle gates is $\bigO((\log{n})^{k-1})$.
This restriction allows us to show that in the relativized world, \NCOne\ is still contained in \L,
and that the circuit value problem (for oracle \NCk\ circuits)
is still complete for \NCk\ as expected.

\begin{definition}[\NCkalpha]
\label{d:NCkalpha}
For $k \ge 1$, a language is in $\NCkalpha$ if it is computable by a uniform family of \NCkalpha\ circuits,
i.e., \ACkalpha\ circuits
where the $\wedge$ and $\vee$ gates have fanin 2, and the nested depth of $\alpha$ gates is $\bigO((\log{n})^{k-1})$.
\end{definition}

The following inclusions extend the inclusions of the nonrelativized classes:
\begin{equation*}
\ACZ(\alpha) \subseteq \ACZ(m,\alpha)
\subseteq \TCZ(\alpha) \subseteq \NCOne(\alpha) \subseteq \ACOne(\alpha)
\subseteq \ldots \subseteq \Ptime(\alpha)
\end{equation*}

\ignore{
The next lemma is straightforward:
\begin{lemma}\label{t:NC-alternative}
For $k \ge 1$, a language is in \NCkalpha\ if and only if it is computed by
a uniform family of circuits $\{C_n\}$, where each $C_n$
is constructed from a constant number of $\AC^{k-1}(\alpha)$-circuits
and \NCk-circuits $C^n_1, \ldots, C^n_t$ in the way that the outputs of $C^n_i$ are inputs to $C^n_{i+1}$ ($1\le i < t$),
the inputs of $C^n_1$ are the inputs of $C_n$, and the outputs of $C^n_t$ are the outputs of $C_n$.
\end{lemma}
}

Further the \ACZ-complete problems for \ACZ(m), \TCZ, and \NCOne\
are also \ACZalpha-complete for the corresponding relativized classes.
This is expressed by the next result, using the following complete
problems: \modm\ and \THRESH\ (the threshold function)  are \ACZ-complete
for \ACZ(m) and \TCZ\ respectively, and \FormVal\
(the Boolean formula value problem) is both \ACZ-complete and
\ACZ-many-one complete for \NCOne.

\begin{proposition}\label{p:complete}
\begin{eqnarray}
\ACZ(m,\alpha) & = & \ACZ(\modm,\alpha)\\
\TCZ(\alpha) & = & \ACZ(\THRESH,\alpha)\\
\NCOne(\alpha) & = & \ACZ(\FormVal,\alpha)
\end{eqnarray}
\end{proposition}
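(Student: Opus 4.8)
The plan is to prove each of the three equalities by two inclusions, in each case reducing matters to the corresponding \emph{non}-relativized completeness facts quoted just before the statement — that \modm, \THRESH\ and \FormVal\ are \ACZ-complete for \ACZ(m), \TCZ\ and \NCOne\ respectively, with \FormVal\ moreover \ACZ-many-one-complete for \NCOne\ — together with the single extra observation that every reduction used there is ``oracle-oblivious'': it never inspects $\alpha$, so the $\alpha$ gates can simply be threaded through unchanged. Thus the content is just that these \ACZ-completeness arguments relativize, and the only genuinely new point is keeping track of the structural conditions from Definition~\ref{d:NCkalpha}.

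The two lower equalities are easy. For $\ACZ(m,\alpha)=\ACZ(\modm,\alpha)$: a mod-$m$ gate on $\ell$ inputs is the special case of a \modm-gate obtained by an \ACZ\ reading of (part of) its input, so after this trivial rewriting every $\ACZ(m,\alpha)$ circuit \emph{is} an $\ACZ(\modm,\alpha)$ circuit; conversely, since $\modm\in\ACZ(m)$, each \modm-oracle gate can be replaced by a small $\ACZ(m)$ subcircuit, leaving all $\alpha$ gates exactly where they are, which preserves polynomial size, constant depth and DLOGTIME uniformity. The argument for $\TCZ(\alpha)=\ACZ(\THRESH,\alpha)$ is the same, using that a threshold/majority gate is a special case of a \THRESH-gate and that $\THRESH\in\TCZ$.

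The substantive case is $\NCOne(\alpha)=\ACZ(\FormVal,\alpha)$. For $\NCOnealpha\subseteq\ACZ(\FormVal,\alpha)$: by Definition~\ref{d:NCkalpha} a DLOGTIME-uniform \NCOnealpha\ family $\{C_n\}$ has polynomial size, fanin-$2$ Boolean gates, total depth $\bigO(\log n)$, and nested $\alpha$-depth bounded by a constant $c$; a routine layering argument then presents each $C_n$ as a composition, of structural depth at most $c$, of fanin-$2$ Boolean subcircuits of depth $\bigO(\log n)$ and layers of $\alpha$ gates (the Boolean parts doing all the wiring from inputs and lower $\alpha$-gate outputs into the next $\alpha$ layer, and into the output). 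Each Boolean part, viewed as a family over $n$ on polynomially many input bits, is a DLOGTIME-uniform \NCOne\ family, hence by \ACZ-many-one-completeness of \FormVal\ it equals $\FormVal\circ f$ for an \ACZ-computable $f$; splicing in, for each Boolean part, an \ACZ\ circuit for $f$ feeding one \FormVal\ gate — and touching no $\alpha$ gate — yields a constant-depth circuit over Boolean, \FormVal\ and $\alpha$ gates, i.e.\ an $\ACZ(\FormVal,\alpha)$ circuit, whose uniformity is inherited from that of $\{C_n\}$ and of the finitely many reductions used. Conversely, for $\ACZ(\FormVal,\alpha)\subseteq\NCOnealpha$: a DLOGTIME-uniform \ACZ\ circuit $D$ with \FormVal\ and $\alpha$ gates has constant depth, so its $\alpha$ gates are nested only $\bigO(1)$ deep; replace each unbounded-fanin Boolean gate by a balanced fanin-$2$ tree of depth $\bigO(\log n)$ and each \FormVal\ gate by a DLOGTIME-uniform \NCOne\ circuit for the formula value problem (which exists since $\FormVal\in\NCOne$), again fanin-$2$ and depth $\bigO(\log n)$, and leave the $\alpha$ gates in place. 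There are $\bigO(1)$ such layers, so the total depth is $\bigO(\log n)$, the size stays polynomial, no new $\alpha$ gate is introduced so the nested $\alpha$-depth remains $\bigO(1)=\bigO((\log n)^{0})$, and uniformity is preserved; by Definition~\ref{d:NCkalpha} the result is an \NCOnealpha\ circuit.

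The main obstacle is not conceptual but careful bookkeeping: one must verify that the decomposition of an \NCOnealpha\ circuit into a constant number of fanin-$2$ Boolean blocks interleaved with $\alpha$-gate layers really goes through, that every rewriting step (oracle gate $\to$ subcircuit, unbounded fanin $\to$ balanced tree, Boolean block $\to$ \FormVal\ gate, \FormVal\ $\to$ \NCOne\ circuit) can be carried out DLOGTIME-uniformly, and — the point that distinguishes the relativized setting — that none of these steps ever increases the nested depth of $\alpha$ gates beyond $\bigO(1)$, so that Definition~\ref{d:NCkalpha} is respected throughout. Once this is in place, the proposition is just the relativization of the standard \ACZ-completeness arguments.
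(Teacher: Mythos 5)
Your proposal is correct and follows essentially the same route as the paper, whose proof consists of exactly your two observations: queries to the complete problem can be replaced by circuits computing them (leaving the $\alpha$ gates untouched), and conversely the bounded nesting depth of $\alpha$ gates lets the circuit be sliced into nonrelativized blocks that are handled by the unrelativized completeness of \modm, \THRESH\ and \FormVal. Your write-up just spells out the layering and uniformity bookkeeping that the paper leaves implicit (note only that a multi-output Boolean block needs one \FormVal\ gate per output bit, not a single gate, which changes nothing).
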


\begin{proof}
Each class on the right is included in the  corresponding class on
the left because
a query to the complete problem can be replaced by a circuit 
computing the query.  Each class on the left is a subset of the
corresponding class on the right
because the queries to $\alpha$ on the left have bounded
nesting depth.  
\end{proof}

Note that there is no similar characterization of $\ACOne(\alpha)$
or $\Ptime(\alpha)$,
because here the queries to $\alpha$ can have unbounded nesting depth.

\subsection{Relativized Logspace Classes}

To define oracle logspace classes, we use a modification of
Wilson's stack model \cite{Wilson:88:jcss}.
An advantage is that the relativized classes defined here are
closed under \ACZ-reductions.
This is not true for the non-stack model.

A Turing machine \sM\ with a stack of oracle tapes can write 0 or 1
onto the top
oracle tape if it already contains some symbols, or it can start writing on
an empty oracle tape.
In the latter case, the new oracle tape will be at the top of the stack,
and we say that \sM\ performs a {\it push} operation.
The machine can also {\it pop} the stack, and its next action and state depend
on $\alpha(Q)$, where $Q$ is the content of the top oracle tape.
Note that here the oracle tapes are write-only.

Instead of allowing an arbitrary number of oracle tapes,
we modify Wilson's model by allowing only a stack of constant height
(hence the prefix ``$\mbox{cs}$'' in \csLalpha\ and \csNLalpha).
This places the relativized classes in the same order as the order of
their unrelativized counterparts.

In the definition of \csNLalpha, we also use the restriction
\cite{Ruzzo:Simon:Tompa} that the machine
is deterministic when the oracle stack is non empty or when it
is in a push state.

\begin{definition}[\csLalpha, \csNLalpha]
\label{d:csL-NL}
For a unary relation $\alpha$ on strings, \csLalpha\ is the class
of languages computable by logspace, polytime Turing machines
using an $\alpha$-oracle stack whose height is bounded by a constant.
\csNLalpha\ is defined as \csLalpha\ but the Turing machines are allowed
to be nondeterministic when the oracle stack is empty.
\end{definition}

\begin{theorem}\label{t:order-a}
$\NCOne(\alpha) \subseteq \csLalpha \subseteq \csNLalpha \subseteq \AC^1(\alpha)$.
\end{theorem}

\begin{proof}
The second inclusion is immediate from the definitions,
the first can be proved as in the standard proof of the fact that $\NCOne \subseteq \L$
(see also \cite{Wilson:88:jcss}).
The last inclusion can actually be strengthened, as shown in the next theorem.
\end{proof}

The next theorem partly extends Proposition \ref{p:complete} to the two
new classes.  Recall that \STCONN\ is the problem: given $(G, s, t)$,
where $s, t$ are two designated vertices of a directed graph $G$,
decide whether there is a path from $s$ to $t$.  We define
\OneSTCONN\ to be the same, except we require that every node in $G$
has out degree at most one.   Then \STCONN\ and \OneSTCONN\
are \ACZ-many-one complete for \NL\ and \L, respectively.

A \csLalpha\ function is defined by allowing the \csLalpha\ machine to
write on a write-only output tape.
Then the notion of many-one \csLalpha\ reducibility is defined as usual.

\begin{theorem}\label{t:order-b}
\begin{description}
\item[(i)] $\csLalpha = \AC^0(\OneSTCONN,\alpha)$ 
\item[(ii)] $\csNLalpha \subseteq \AC^0(\STCONN,\alpha)$
\item[(iii)] A language is in \csNLalpha\ iff it is many-one
\csLalpha-reducible to \STCONN.
\end{description}
\end{theorem}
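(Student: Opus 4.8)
The plan is to derive all three parts from one construction that analyzes a run of a \csLalpha\ (resp.\ \csNLalpha) machine \sM\ level by level in the height of its oracle stack. Fix such an \sM\ with stack height bounded by a constant $h$; since \sM\ runs in polynomial time it has only polynomially many \emph{surface configurations} (state, work tape, head positions, but not the oracle stack), each describable by $\bigO(\log n)$ bits, and it writes only polynomially many symbols on any single oracle tape. For $\ell=h,h-1,\dots,1$ I would build, by downward induction on $\ell$, a graph $H_\ell$ whose vertices are pairs (surface configuration, write-counter), the counter recording how many symbols have been written on the current top oracle tape. Its edges encode \sM's moves while the stack has height exactly $\ell$: a non-oracle move keeps the counter, a write increments it, and --- using the data already computed for level $\ell+1$ --- a push from a configuration $c$ is short-cut directly to $(\mathrm{AfterPop}_{\ell+1}(c'),k)$, where $c'$ is the height-$(\ell{+}1)$ configuration just after the push and $\mathrm{AfterPop}_{\ell+1}$ returns the height-$\ell$ configuration in which \sM\ next pops. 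Because \sM\ is deterministic whenever the stack is non-empty, every $H_\ell$ with $\ell\ge1$ has out-degree at most one, and at height $h$ there are no push edges. From $H_\ell$ one reads the query string written at level $\ell$ from a start configuration $c_0$: its $i$-th bit is $1$ exactly when $(c_0,0)$ reaches in $H_\ell$ some ``write-$1$'' vertex whose counter equals $i$, a polynomial-size disjunction of \OneSTCONN\ instances on $H_\ell$; feeding that string into an $\alpha$-gate and combining with the terminal pop configuration --- located likewise by a disjunction of \OneSTCONN\ queries --- yields $\mathrm{AfterPop}_\ell$. Each induction step adds only $\bigO(1)$ to the nesting depth of oracle gates, so after $h$ steps all this data is computed by a polynomial-size, $\bigO(1)$-depth family of circuits with \OneSTCONN\ and $\alpha$ gates, the graphs $H_\ell$ for $\ell<h$ being themselves described by such sub-circuits --- a legitimate nesting, since the total depth remains constant.

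For part (i), running \sM\ to completion amounts to following its height-$0$ moves from the initial configuration, again short-cutting pushes through $\mathrm{AfterPop}_1$; since a \csLalpha\ machine is deterministic at height $0$ as well, the graph $H_0$ has out-degree at most one and ``\sM\ accepts $x$'' is a single \OneSTCONN\ query on $H_0$ (augmented by a sink vertex fed from the accepting configurations), which gives $\csLalpha\subseteq\AC^0(\OneSTCONN,\alpha)$. For the reverse inclusion I would use that $\OneSTCONN\in\L$ and evaluate an $\bigO(1)$-depth circuit with \OneSTCONN\ and $\alpha$ gates by a deterministic logspace machine that follows the unique path for each \OneSTCONN\ gate and, for each $\alpha$ gate, pushes a fresh oracle tape, writes the recursively evaluated query bits, and pops: the oracle stack never exceeds the $\alpha$-nesting depth of the circuit, a constant, and the total running time is polynomial because the depth is constant; hence $\AC^0(\OneSTCONN,\alpha)\subseteq\csLalpha$. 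Part (ii) uses the same construction, except that a \csNLalpha\ machine may branch at height $0$ (only on non-push, non-oracle moves), so $H_0$ merely has bounded out-degree and acceptance becomes a \STCONN\ query; levels $\ell\ge1$ are unchanged (their \OneSTCONN\ queries being in particular \STCONN\ queries), so $\csNLalpha\subseteq\AC^0(\STCONN,\alpha)$.

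For part (iii), if $A$ is many-one \csLalpha-reducible to \STCONN\ via $f$, I would decide $A$ by running the (oracle-free) \NL\ algorithm for \STCONN\ on the virtual input $f(x)$, generating bits of $f(x)$ on demand with the \csLalpha\ machine computing $f$; the combined machine is nondeterministic only when the \STCONN-stage has empty stack and is not inside a bit computation, and it touches the $\alpha$-stack only during the deterministic bit computations, so it witnesses $A\in\csNLalpha$. Conversely, given \sM\ with $A\in\csNLalpha$, the reduction $f(x)$ outputs the description of $H_0$ together with its accepting-configuration sink; the only nontrivial edges are the height-$0$ push edges, and $\mathrm{AfterPop}_1(c')$ is obtained simply by continuing to run \sM\ deterministically from $c'$ until it pops back to height $0$ --- which a deterministic logspace machine can do using an $\alpha$-oracle stack of height at most $h$ and \emph{no} reachability oracle, since \sM\ is deterministic above height $0$. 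Thus $f$ is a \csLalpha\ function with $x\in A\iff f(x)\in\STCONN$.

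I expect the main obstacle to be this level-by-level construction: keeping the nesting depth of oracle gates bounded by a constant while producing the polynomially long query strings as explicit inputs to $\alpha$-gates, and respecting the out-degree-one restriction of \OneSTCONN. The latter I would handle by writing each query bit, and each ``which configuration did the stack pop to'', as a polynomial disjunction of plain reachability queries on $H_\ell$ rather than routing all witnesses through a common added sink (which would break the out-degree bound); the former is under control because there are only $h=\bigO(1)$ levels, each contributing $\bigO(1)$ to the depth.
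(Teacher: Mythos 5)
Your proof is correct and follows essentially the same route as the paper's: stratify configurations by oracle-stack height, short-cut each push--pop excursion into a single edge computed from the next level by reachability queries plus one $\alpha$-gate (so the $\alpha$-nesting depth is the stack bound $h$), reduce acceptance to one reachability query at level $0$, prove $\AC^0(\OneSTCONN,\alpha)\subseteq\csLalpha$ by depth-first evaluation with stack height equal to the $\alpha$-nesting depth, and get (iii) via on-demand bit computation in one direction and \csLalpha-computability of the level-$0$ graph in the other. Your only real deviation is adding the write-counter to the vertices, a harmless refinement that makes explicit how the query string and the post-pop configuration are recovered (the paper instead extracts the query by an \ACZ\ circuit from the level-$(k{+}1)$ configuration sequence).
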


\begin{proof}

We start by proving the inclusion
$\AC^0(\OneSTCONN,\alpha) \subseteq \csLalpha$ in (i).
A problem in $\AC^0(\OneSTCONN,\alpha)$ is given by a uniform polynomial
size constant depth circuit family $\{C_n\}_{n\in\NN}$ with oracle
queries to \OneSTCONN\ and $\alpha$.  A \csLalpha\ machine \sM\
on an input $x$ of length $n$ performs a depth-first search of the
circuit $C_n$ with input $x$.  Each $\alpha$ oracle gate at depth $k$
is answered using an oracle query at stack height $k$, and each
oracle query to \OneSTCONN\ is answered by a log-space computation.

Note that this argument does not work for the corresponding
inclusion in (ii), because once the oracle stack is nonempty a
\csNLalpha\ machine becomes deterministic and cannot answer
oracle queries to \STCONN\ (assuming $\L \ne \NL$).

Now we prove the inclusion (ii).  (The corresponding inclusion in
the equation (i) is proved similarly.)
Let \sM\ be a nondeterministic logspace Turing machine with a constant-height stack of oracle tapes.
Let $h$ be the bound on the height of the oracle stack.
There is a polynomial $p(n)$ so that for each input length $n$
and oracle $\alpha$, \sM\ has at most $p(n)$ possible configurations:
\begin{equation}\label{e:csNL-vertices}
u_0 = \mathit{START},\ u_1 = \mathit{ACCEPT},\ u_2, \ldots, u_{p(n)-1}
\end{equation}
Here a configuration $u_i$ encodes information about the internal
state, the
content and head position of the work tape, and the position of the input
tape head, {\em but no explicit information about the oracle stack}
(although the internal state might encode implicit information).

Given an input $x$ of length $n$ we construct a sequence $G_0,\cdots,G_h$
of directed graphs such that the set $V_k$ of nodes in $G_k$ consists of
all pairs $(k,u)$, where $u$ is a configuration and $k$ is the current
height of the oracle stack (so $0\le k\le h$).  Thus $(k,u)$
represents a `height $k$ configuration'.  Note that the
computation of \sM\ on input $x$ can be described by a sequence of
nodes in $\bigcup_k V_k$.  We want to define the edge set $E_k$ so
that
\begin{equation}\label{e:edgeDef}
\mbox{$((k,u),(k,u'))\in E_k$ iff $(k,u')$ can be the next height $k$
configuration after $(k,u)$}
\end{equation}

The edges $E_k$ in $G_k$ comprise the union
$$
     E_k = E^0_k \cup E^1_k
$$
Define $E^0_k$ to consist of all pairs $((k,u),(k,u'))$ such that
$u$ does not cause a push or pop and $u'$ is a possible successor to $u$. 
If $k\ge 1$ then we also require that $u$ be a deterministic
configuration.

Define $E^1_k$ to be empty if $k=h$, and if $0\le k < h$ then $E^1_k$
consists of all pairs $((k,u),(k,u'))$ such that $u$ is a deterministic
configuration causing a push, and there is a sequence
\begin{equation}\label{e:vSeq}
    (k+1, v_0),\ldots,(k+1,v_t)
\end{equation}
of configurations such that $v_0$ is the successor
of $u$ and $((k+1,v_i), (k+1,v_{i+1})) \in E_{k+1}$ for $0\le i<t$
and $v_t$ causes a pop and $u'$ is the successor of $v_t$ (given
that $(k,u)$ is the most recent level $k$ node preceding $(k+1,v_t)$).

Note that in a computation from $(k,u)$ to $(k,u')$ the sequence
(\ref{e:vSeq}) is the sequence of height $k+1$ nodes, and this
sequence determines the string $Q$ that is written on the the height $k+1$
oracle tape, and hence determines whether $u'$ is the successor of $v_t$.

It is easy to prove (\ref{e:edgeDef}) by induction on
$k=h,h-1,\ldots,0$.  For $k=0$ this implies that $\sM$ accepts $x$
iff there is a path in $G_0$ from $(0,\mathit{START})$ to
$(0,\mathit{ACCEPT})$.  Thus it suffices to show that some
$AC^0(\STCONN,\alpha)$ circuit computes the adjacency matrix
of each graph $G_k$ given the input $x$.   In fact it is easy to
see that some $AC^0$ circuit with input $x$ outputs the
adjacency matrix for each edge set $E^0_k$.  Hence it suffices
to show that some $AC^0(\STCONN,\alpha)$ circuit computes the
adjacency matrix for $E^1_k$ given input $x$ and the adjacency
matrix for $E^1_{k+1}$, $0\le k < h$.  This can be done since
the the elements for the sequence (\ref{e:vSeq}) can be obtained
from $E_{k+1}$ using oracle queries to $\STCONN$, and the
string $Q$ that is written on the the height $k+1$ oracle tape
can be extracted from this sequence using an $AC^0$ circuit,
and so $\alpha(Q)$ can be used to determine $u'$ is the successor
of $v_t$.

Note that the depth of nesting of oracle calls to $\alpha$ is $h$.

To prove (iii) we note that the direction $(\Leftarrow)$
is easy:  A \csNLalpha\ machine \sM\ on input $x$ answers the single
query $f(x,\alpha)$ to \STCONN\
by simulating the \NL\ machine $\sM'$ that answers the query on
input $f(x,\alpha)$, and
each time $\sM'$ requires another input bit, \sM\ deterministically
computes that bit.

To prove (iii) in the direction $(\Rightarrow)$ we note that the
edge relation $E_0$ defined in the proof of (ii) can be computed
by an \csLalpha\ machine.  Then as noted above, $\sM$ accepts its input
$x$ iff there is a path in $G_0$ from $(0,\mathit{START})$ to $(0,\ACCEPT)$,
which is an instance of \STCONN.
\end{proof}

\begin{corollary}\label{c:separ}
The existence of an oracle $\alpha$ separating any two of the classes
$$
\ACZ(m,\alpha)\subseteq \TCZ(\alpha)\subseteq \NCOne(\alpha)\subseteq 
    \csLalpha \subseteq \csNLalpha 
$$
implies the separation of the respective nonrelativized classes.
\end{corollary}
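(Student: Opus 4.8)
The plan is to prove the contrapositive: for any two classes $\mathcal{C}\subseteq\mathcal{D}$ among the five listed, a collapse of the \emph{unrelativized} classes, $\mathcal{C}=\mathcal{D}$, should force $\mathcal{C}(\alpha)=\mathcal{D}(\alpha)$ for every oracle $\alpha$. Since the chain of inclusions is fixed and equality is transitive, I would reduce to the \emph{consecutive} pairs only: a collapse $\mathcal{C}=\mathcal{D}$ makes every class strictly between them collapse too, so chaining the consecutive statements (and using $\csLalpha\subseteq\csNLalpha$ etc.\ from Theorem~\ref{t:order-a}) yields the general claim. The engine in every case is the routine fact that \ACZ-reductions (Turing style) compose and relativize: if a problem $P$ is \ACZ-reducible to $R$, then $\ACZ(P,\alpha)\subseteq\ACZ(R,\alpha)$, obtained by replacing each $P$-oracle gate with the \ACZ\ circuitry of the reduction (now carrying $R$-gates), which preserves polynomial size and constant depth, leaves the $\alpha$-gates untouched, and stays DLOGTIME-uniform.

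For the first three consecutive pairs $(\ACZm,\TCZ)$, $(\TCZ,\NCOne)$, $(\NCOne,\L)$ I would invoke Proposition~\ref{p:complete} and Theorem~\ref{t:order-b}(i), giving $\ACZ(m,\alpha)=\ACZ(\modm,\alpha)$, $\TCZ(\alpha)=\ACZ(\THRESH,\alpha)$, $\NCOne(\alpha)=\ACZ(\FormVal,\alpha)$ and $\csLalpha=\ACZ(\OneSTCONN,\alpha)$; specializing to a trivial $\alpha$ (and using that $\csL=\L$) these read $\ACZ(m)=\ACZ(\modm)$, $\TCZ=\ACZ(\THRESH)$, $\NCOne=\ACZ(\FormVal)$, $\L=\ACZ(\OneSTCONN)$, since each of the four problems is \ACZ-complete for its class. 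Suppose, say, $\NCOne=\L$. Then \FormVal, lying in $\L$, is \ACZ-reducible to the $\L$-complete problem \OneSTCONN, and \OneSTCONN, lying in $\L=\NCOne$, is \ACZ-reducible to the \NCOne-complete problem \FormVal; the composition observation then gives $\ACZ(\FormVal,\alpha)=\ACZ(\OneSTCONN,\alpha)$, i.e.\ $\NCOne(\alpha)=\csLalpha$ for every $\alpha$. The pairs $(\ACZm,\TCZ)$ and $(\TCZ,\NCOne)$ go identically, with the appropriate pair of complete problems in place of \FormVal\ and \OneSTCONN.

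The remaining pair $(\csLalpha,\csNLalpha)$ is where I expect the only real friction, because here there is no clean equality $\csNLalpha=\ACZ(P,\alpha)$ to exploit — only the (possibly strict) inclusion $\csNLalpha\subseteq\ACZ(\STCONN,\alpha)$ of Theorem~\ref{t:order-b}(ii), together with $\csLalpha=\ACZ(\OneSTCONN,\alpha)$. Assuming $\L=\NL$, I would argue: \STCONN\ is \ACZ-complete for $\NL$, so $\STCONN\in\NL=\L$, whence \STCONN\ is \ACZ-reducible to the $\L$-complete problem \OneSTCONN; the reverse reduction is trivial, since an out-degree-$\le 1$ graph is a graph. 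By the composition observation, $\ACZ(\STCONN,\alpha)=\ACZ(\OneSTCONN,\alpha)=\csLalpha$, and combining with Theorem~\ref{t:order-b}(ii) gives $\csNLalpha\subseteq\csLalpha$; with $\csLalpha\subseteq\csNLalpha$ this yields $\csLalpha=\csNLalpha$ for every $\alpha$. This disposes of all consecutive pairs and hence the corollary. The one point to handle with care is exactly this last step: one must route through the inclusion of Theorem~\ref{t:order-b}(ii) rather than an equality, whereas everything else is the bookkeeping that \ACZ-reductions compose and relativize.
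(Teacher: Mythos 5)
Your argument is correct and is essentially the paper's own proof, just written out in detail: the paper also argues the contrapositive, noting that a collapse of the unrelativized classes makes the corresponding complete problems \ACZ-equivalent, and then cites Proposition \ref{p:complete} together with Theorem \ref{t:order-b} parts (i) and (ii) --- part (ii) supplying exactly the one-sided inclusion $\csNLalpha\subseteq\ACZ(\STCONN,\alpha)$ that you route through for the pair $(\csLalpha,\csNLalpha)$. No gaps; your explicit treatment of that last pair is precisely what the paper's terse proof leaves implicit.
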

\begin{proof}
This follows from Proposition \ref{p:complete} and Theorem \ref{t:order-b}
parts (i) and (ii).  If any two of the nonrelativized classes is equal,
then the corresponding complete problems would be \ACZ-equivalent,
and hence the relativized classes would be equal.
\end{proof}

\begin{corollary}[Relativized Immerman-\Sze\ Theorem]
\csNLalpha\ is closed under complementation.
\end{corollary}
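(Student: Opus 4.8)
The plan is to reduce the statement to the \emph{unrelativized} Immerman--\Sze\ theorem by means of the characterization of $\csNLalpha$ in Theorem~\ref{t:order-b}(iii). All of the relativization has already been absorbed into that characterization: the combinatorial heart of Immerman--\Sze\ concerns only the graph problems $\STCONN$ and $\overline{\STCONN}$, neither of which mentions the oracle $\alpha$.

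First I would recall the classical inductive counting argument showing $\overline{\STCONN}\in\NL$: on input $(G,s,t)$ a nondeterministic logspace machine decides whether $t$ is unreachable from $s$ by computing successively, for $d=0,1,2,\dots$, the number $c_d$ of vertices reachable from $s$ within $d$ steps, using at stage $d$ only $c_{d-1}$ together with a logspace nondeterministic recount, halting at the first $d$ with $c_d=c_{d-1}$, and accepting iff $t$ was never enumerated. Since this is an ordinary graph computation invoking no oracle, a $\csNLalpha$ machine for $\overline{\STCONN}$ simply never touches its oracle stack, so $\overline{\STCONN}\in\csNLalpha$; hence, by Theorem~\ref{t:order-b}(iii), there is a $\csLalpha$ function $g$ reducing $\overline{\STCONN}$ to $\STCONN$. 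Now let $L\in\csNLalpha$ be arbitrary. Theorem~\ref{t:order-b}(iii) supplies a $\csLalpha$ function $f$ with $x\in L\iff f(x,\alpha)\in\STCONN$, so that $x\in\bar L\iff f(x,\alpha)\in\overline{\STCONN}$; composing, $g\circ f$ is a many-one $\csLalpha$-reduction of $\bar L$ to $\STCONN$ (granting, as discussed below, that it is indeed a $\csLalpha$ function), and a final application of Theorem~\ref{t:order-b}(iii) yields $\bar L\in\csNLalpha$. Equivalently, one could avoid $g$ and argue directly as in the $(\Leftarrow)$ direction of Theorem~\ref{t:order-b}(iii): simulate an $\NL$ machine for $\overline{\STCONN}$ on the virtual input $f(x,\alpha)$, recomputing each requested bit of $f(x,\alpha)$ by rerunning the $\csLalpha$ machine for $f$.

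The one point that genuinely requires care --- and it is bookkeeping, not conceptual --- is that the composition $g\circ f$ is again a $\csLalpha$ function, i.e.\ that it stays within the required resources: logspace (the work tape of the outer machine, plus the index of the currently requested bit, plus the work tape of the inner machine, all $\bigO(\log n)$ since $|f(x,\alpha)|$ is polynomially bounded in $n$); polynomial time (polynomially many steps of the outer machine, each triggering at most one polytime rerun of the inner machine); and, crucially, a constant bound on the height of the oracle stack, which holds because the stack heights of the two composed machines simply add and both are constant. Deterministic machines compose without conflicting with the requirement that the machine be deterministic whenever the oracle stack is nonempty or in a push state; in the alternative simulation the only nondeterministic steps are those inherited from the $\NL$ machine for $\overline{\STCONN}$, and these are taken while the oracle stack is empty. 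In either formulation this verification is essentially the one already carried out in the proof of Theorem~\ref{t:order-b}(iii).
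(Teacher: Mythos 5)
Your proof is correct and follows essentially the same route as the paper: both apply Theorem~\ref{t:order-b}(iii) to reduce the claim to the unrelativized Immerman--\Sze\ theorem, composing the many-one \csLalpha-reduction of a given $\csNLalpha$ language to \STCONN\ with a reduction of $\overline{\STCONN}$ back to \STCONN. The only cosmetic difference is that the paper takes the latter as an \ACZ-many-one reduction (from \ACZ-completeness of \STCONN\ for \NL) while you extract it as a \csLalpha\ reduction via Theorem~\ref{t:order-b}(iii) and then verify closure of \csLalpha\ functions under composition, which the paper records separately as Corollary~\ref{c:cslComp}.
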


\begin{proof}
By Theorem \ref{t:order-b} (iii)
any language in \co{\csNLalpha} is many-one \csLalpha\ reducible to
$\overline{\STCONN}$, which is many-one \ACZ\ reducible to \STCONN.
So $\co{\csNLalpha} \subseteq \csNLalpha$.
\SHORTVERSION{\qed}{}
\end{proof}

Let $\csL^2(\alpha)$ denote the class of languages computable by
a deterministic oracle Turing machine in $\bigO(\log^2)$ space and
constant-height oracle stack.

\begin{corollary}[Relativized Savitch's Theorem]
$\csNLalpha \subseteq \csL^2(\alpha)$.
\end{corollary}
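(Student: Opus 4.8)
The plan is to compose the characterization of $\csNLalpha$ from Theorem~\ref{t:order-b}(iii) with the classical Savitch theorem for $\STCONN$. By Theorem~\ref{t:order-b}(iii), every language $L \in \csNLalpha$ is many-one $\csLalpha$-reducible to $\STCONN$; so first I would fix a $\csLalpha$ machine $\sM_f$ computing such a reduction $f(x,\alpha)$ on a write-only output tape. Since a $\csLalpha$ machine runs in polynomial time, $f(x,\alpha)$ encodes a directed graph $G_x$ together with source and target vertices, and $G_x$ has $m = \poly(n)$ vertices, where $n = |x|$. It then remains to decide the $\STCONN$ instance $f(x,\alpha)$ deterministically in $\bigO(\log^2 n)$ space using a constant-height $\alpha$-oracle stack.

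Next I would invoke the standard logspace-composition trick: rather than store $f(x,\alpha)$ explicitly, a machine recomputes any single bit of it on demand. Given a tape position $i$ (in binary, $\bigO(\log n)$ bits), one simulates $\sM_f$ on $x$, suppressing the output and only counting output moves, until the $i$-th output bit is produced; this uses $\bigO(\log n)$ additional workspace and exactly the constant-height $\alpha$-oracle stack that $\sM_f$ uses, and it leaves that stack empty when it finishes. Call this the \emph{bit subroutine}; it is a deterministic $\bigO(\log n)$-space, constant-stack-height oracle procedure that, given $i$, returns bit $i$ of $f(x,\alpha)$, and in particular can answer queries ``is $(u,v)$ an edge of $G_x$?'' for vertices $u,v$ of $G_x$.

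Then I would run the ordinary Savitch recursion for reachability on $G_x$: to test whether there is a path of length at most $2^j$ from $u$ to $v$, guess a midpoint $w$ and recurse on $(u,w,j-1)$ and $(w,v,j-1)$, with the base case $j=0$ handled by one call to the bit subroutine. Since $G_x$ has $m = \poly(n)$ vertices, the recursion depth is $\bigO(\log m) = \bigO(\log n)$ and each frame holds $\bigO(\log n)$ bits, so the recursion uses $\bigO(\log^2 n)$ space and is deterministic. Adding the $\bigO(\log n)$ workspace of the bit subroutine gives total space $\bigO(\log^2 n)$. Crucially, the $\alpha$-oracle stack is touched \emph{only} inside the bit subroutine, so at any moment its contents are exactly those of a single in-progress invocation of $\sM_f$, whose height is bounded by the constant $h$ fixed by $\sM_f$; hence the composed machine uses an $\alpha$-oracle stack of constant height, and $L \in \csL^2(\alpha)$.

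The only point requiring real care — and the closest thing to an obstacle — is precisely this interaction of the oracle stack with the composition: one must check that layering Savitch's recursion on top of the $\csLalpha$ reduction does not force the stack to grow, which is immediate once one notes that Savitch's algorithm itself makes no oracle calls. (Equivalently, one can bypass part (iii) and argue directly as in the proof of Theorem~\ref{t:order-b}(ii): the edge relation $E_0$ of the graph $G_0$ built there is $\csLalpha$-computable, and applying the Savitch recursion to $G_0$ — which has polynomially many ``height $0$ configuration'' nodes — yields the same conclusion, since by the argument there $\sM$ accepts $x$ iff $(0,\mathit{START})$ reaches $(0,\ACCEPT)$ in $G_0$.)
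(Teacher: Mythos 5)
Your proposal is correct and follows essentially the same route as the paper: invoke Theorem~\ref{t:order-b}(iii) and compose the many-one $\csLalpha$ reduction with Savitch's $\bigO(\log^2)$-space algorithm for $\STCONN$, observing that the composition stays within constant oracle-stack height. The paper states this composition fact in one sentence; you have simply supplied the standard bit-recomputation details (and the observation that the Savitch recursion itself makes no oracle calls) that the paper leaves implicit.
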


\begin{proof}
The corollary follows from Theorem \ref{t:order-b} (iii)
and the fact that the composition of a \csLalpha\ function and a
$(\log^2)$ space function (for \STCONN) is a $\csL^2(\alpha)$ function.
\end{proof}

It is easy to see that the function class associated with either of
the classes $\AC^0(\OneSTCONN,\alpha)$ or $\AC^0(\STCONN,\alpha)$
is closed under composition, so we have the following result.

\begin{corollary}\label{c:cslComp}
The function class associated with $\csL(\alpha)$ is closed under
composition.
\end{corollary}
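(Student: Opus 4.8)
The plan is to deduce the corollary from the identification $\csLalpha = \AC^0(\OneSTCONN,\alpha)$ of Theorem~\ref{t:order-b}(i), reading that theorem at the level of \emph{functions} rather than of languages, and then using the evident fact that a class of the form ``uniform polynomial-size constant-depth circuits with oracle gates'' is closed under composition. First I would note that the proof of Theorem~\ref{t:order-b}(i) already establishes the function version: the simulation of an $\AC^0(\OneSTCONN,\alpha)$ circuit by a \csLalpha\ machine performs a depth-first search of $C_n$ and produces each output bit on demand, so it applies verbatim to a family with many output gates; and conversely, the computation graph $G_0$ built from a \csLalpha\ machine with a write-only output tape is an $\OneSTCONN$ instance (the out-degree bound on configurations is exactly what makes this $\OneSTCONN$ and not $\STCONN$) from which each output bit is extracted by a fixed $\AC^0$ circuit. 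Hence the function class associated with $\csLalpha$ is precisely the class of functions computed by DLOGTIME-uniform polynomial-size constant-depth circuit families equipped with $\OneSTCONN$ and $\alpha$ gates.

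Next I would carry out the ``easy to see'' step, namely that this circuit function class is closed under composition. Let $f,g$ be in the class, witnessed by uniform families $\{C_n\}$ and $\{D_n\}$ of polynomial-size, depth-$\bigO(1)$ circuits with $\OneSTCONN$ and $\alpha$ gates, and let $m(n)$ be the (polynomially bounded) number of output gates of $D_n$. Form the family $\{E_n\}$ obtained by feeding the output wires of $D_n$ into the input wires of $C_{m(n)}$. Then $E_n$ has polynomial size (a polynomial of a polynomial), its depth is the sum of the two constant depths and so is $\bigO(1)$, and since the nesting depth of any gate is bounded by its depth, the $\alpha$-gate nesting depth of $E_n$ is also $\bigO(1)$, as required for membership in the class. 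DLOGTIME uniformity of $\{E_n\}$ is routine index bookkeeping: given a gate index of $E_n$ one decides in logarithmic time whether it lies in the $D_n$-part or the $C_{m(n)}$-part, translates the index, and calls the direct-connection procedure of the appropriate family, where $m(n)$ is itself computed in logarithmic time because $m$ is a fixed polynomial. Thus $f\circ g$ lies in the class, and combining this with the first step gives the corollary.

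I do not expect a genuine obstacle here; the only points that need a word of care are (a) confirming that Theorem~\ref{t:order-b}(i) is really a statement about functions, which is immediate from its proof as indicated above, and (b) the uniformity bookkeeping for the concatenated circuit family. As a sanity check one can also give a purely machine-based argument: run the outer \csLalpha\ machine $M_f$ and, whenever it calls for a bit of its input, compute that bit by simulating the inner machine $M_g$ on the original input; the work tapes use $\bigO(\log n)$ space because $g(x)$ has polynomial length, the total running time stays polynomial, and the oracle stack height never exceeds the sum of the two constant stack-height bounds, hence is still a constant. I would present the circuit argument as the main proof and mention the machine version only as a remark.
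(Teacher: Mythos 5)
Your proposal is correct and follows essentially the same route as the paper: the paper derives the corollary directly from Theorem~\ref{t:order-b}(i), i.e.\ from the identification $\csLalpha = \ACZ(\OneSTCONN,\alpha)$, together with the (stated as evident) fact that the function class associated with an \ACZ-closure of this kind is closed under composition, which is exactly the circuit-stacking argument you spell out. Your added machine-based simulation is a harmless extra confirmation, not a different method.
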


However it is an open question whether the function class associated
with $\csNLalpha$ is necessarily closed under composition, for
the same reason that we cannot necessarily conclude that inclusion
in part (ii) of Theorem \ref{t:order-b} can be changed to equality.
Once the oracle stack in a $\csNLalpha$ machine becomes nonempty
it becomes deterministic, so it is not clear that the machine can solve an
$\STCONN$ problem.

\section{Separating the $\AC^k$ Hierarchy}
\label{s:sep}

One of the obvious benefits of considering relativized complexity
classes is that separations are at hand. Even though the unrelativized
inclusion of $\ACOne$ in the polynomial hierarchy is strongly conjectured
to be strict,
no proof is currently known. On the other hand Wilson \cite{Wilson:87:mst}
showed the existence of an oracle $\alpha_W$ which makes the relativized
$\ACk$-hierarchy is strict. Here we reconstruct a technique used
by Takeuti~\cite{Takeuti:95:apal} to separate theories in weak bounded
arithmetic and use it to give a simpler definition of an oracle
$\alpha$ separating the $\ACk$ hierarchy.  In the next section we show
how to use this
result together with a witnessing theorem to obtain an unconditional
separation of relativized theories capturing the $\AC^k$
hierarchy.

The idea is that computing the $k$'th iterate
$f^k(0)=f(f(\ldots f(0)))$ of a function $f$ is essentially a
sequential procedure,
whereas shallow circuits represent parallel computation. So a circuit
performing well in a sequential task has to be deep. To avoid
the fact that the sequential character of the problem can be
circumvented by precomputing all possible values, the domain of $f$ is
chosen big enough; we will consider functions
$f\colon\{0,1\}^n\to\{0,1\}^n$.

Of course with such a big domain we cannot represent such functions
simply by a value table. That's how oracles come into
play: oracles allow us to provide a predicate on strings as input,
without the need of having an input bit for every string. In
fact, the number of bits potentially accessible by an oracle gate is
exponential in the number of its input wires.

Therefore we represent the $i$'th bit of $f(x)$ for $x\in\{0,1\}^n$ by
whether or not the string $x\underline i$ belongs to the language of
the oracle. Here $\underline i$ is some canonical coding of the
natural number $i$ using $\log n$ \footnote{We use $\log n$ to stand
for $\lceil \log_2(n+1)\rceil$.} bits.

Our argument can be summarized as follows. We assume a circuit of 
depth $d$ (i.e., the circuit has $d$ levels)
is given that supposedly computes the $\ell$'th iterate of
any function $f$ given by the oracle. Then we construct, step by step,
an oracle that fools this circuit, if $\ell>d$. To do so, for each
layer of the circuit we decide how to answer the oracle questions, and
we do this in
a way that is consistent with the previous layers and such that all
the circuit at layer $i$ knows about $f$ is at most the value of $f^i(0)$.
To make
this step-by-step construction possible we have to consider partial
functions during our construction.

%
If $A$ and $B$ are sets we denote by $f\colon A\partialto B$ that $f$
is a {partial function from $A$ to $B$}. In other words, $f$ is a
function, its domain $\dom f$ is a subset of $A$ and its range $\rng
f$ is a subset of $B$.

\begin{DEF}\label{def:ell-sequential-partial-function}
A partial function
$f\colon\{0,1\}^n\partialto\{0,1\}^n$ is called
\emph{$\ell$-sequential} if for some $k\leq\ell$ it is the case that
$0,f(0),f^2(0),\ldots,f^k(0)$ are all defined, but $f^k(0)\not\in\dom f$.
\end{DEF}

Note that
in Definition~\ref{def:ell-sequential-partial-function} it is
necessarily the case that $0,f(0),f^2(0),\ldots,f^k(0)$ are distinct.
\SHORTVERSION{For the easy proof of the next lemma,
see \cite{Aehlig:Cook:Nguyen}.}{}


\begin{lem}\label{lem:ell-sequential-extension}
Let $n\in\NN$ and $f\colon\{0,1\}^n\partialto\{0,1\}^n$ be an
$\ell$-sequential partial function.  Let $M\subset\{0,1\}^n$
be such that $|\dom f \cup M|<2^n$. Then there is an $(\ell+1)$-sequential
extension $f'\supseteq f$ with $\dom{f'}=\dom f\cup M$.
\end{lem}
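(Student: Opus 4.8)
The plan is to construct $f'$ by hand: leave $f'$ equal to $f$ on $\dom f$, and define it on the remaining points $M\setminus\dom f$ in a way that preserves the sequential chain of $f$ and, when necessary, lengthens it by exactly one step. A single case distinction on where the current endpoint of the chain lies will suffice, so I do not expect this to be hard.

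First I would fix a witness $k\le\ell$ for the $\ell$-sequentiality of $f$ and write $a_j := f^j(0)$ for $0\le j\le k$, so that $a_0=0$, the $a_0,\dots,a_k$ are pairwise distinct (by the remark following Definition~\ref{def:ell-sequential-partial-function}), and $a_k\notin\dom f$. The key observation is that since $f\subseteq f'$ and $a_0,\dots,a_{k-1}\in\dom f$, \emph{any} extension $f'$ automatically satisfies $(f')^j(0)=a_j$ for all $j\le k$; the only thing that can differ between $f$ and $f'$ along the chain is what happens at $a_k$.

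Then I would split into two cases. If $a_k\notin M$, then $a_k\notin\dom f\cup M$, so I simply define $f'(x)$ arbitrarily (say $f'(x):=0^n$) for each $x\in M\setminus\dom f$; then $(f')^k(0)=a_k\notin\dom{f'}$, so $f'$ is $k$-sequential, hence $(\ell+1)$-sequential since $k\le\ell$. If instead $a_k\in M$, I use the hypothesis $|\dom f\cup M|<2^n$ to choose some $a_{k+1}\in\{0,1\}^n\setminus(\dom f\cup M)$, set $f'(a_k):=a_{k+1}$, and define $f'$ arbitrarily on the rest of $M\setminus\dom f$; then $(f')^{k+1}(0)=a_{k+1}$ is defined but $a_{k+1}\notin\dom{f'}=\dom f\cup M$, so $f'$ is $(k+1)$-sequential, and $k+1\le\ell+1$. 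In both cases $\dom{f'}=\dom f\cup M$ and $f\subseteq f'$ by construction, which is what the lemma demands.

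The only point needing any care — and it is minor — is the second case: one must verify that the freshly chosen $a_{k+1}$ really lies outside $\dom{f'}$ so that the chain terminates there, which is exactly what the counting hypothesis $|\dom f\cup M|<2^n$ secures (it leaves at least one string unused). The distinctness of $a_0,\dots,a_{k+1}$ is then immediate, since $a_{k+1}$ was chosen to avoid $\dom f\cup M\supseteq\{a_0,\dots,a_k\}$. So I anticipate no real obstacle; once the right case split is identified the lemma is a short bookkeeping argument.
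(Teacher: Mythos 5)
Your proof is correct and rests on the same key idea as the paper's: the cardinality hypothesis supplies a fresh string outside $\dom f\cup M$, and the extension is arranged so that the iteration chain from $0$ terminates at a point outside $\dom{f'}$ after at most one extra step. The paper's version merely avoids your case split by mapping \emph{every} point of $M\setminus\dom f$ to that one fresh element and then arguing by contradiction; this is a cosmetic difference, not a different route.
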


\begin{proof}
Let $a\in\{0,1\}^n\setminus (M\cup\dom f)$. Such an $a$ exists by our assumption
on the cardinality of $M\cup\dom f$. Let $f'$ be $f$ extended by
setting $f'(x)=a$ for all $x\in M\setminus\dom f$. This $f'$ is as
desired. 

Indeed, assume that $0,f'(0),\ldots,f'^{\ell+1}(0),f'^{\ell+2}(0)$ are
all defined. Then, since $a\not\in\dom{f'}$, all the
$0,f'(0),\ldots,f'^{\ell+1}(0)$ have to be different from $a$. Hence these
values have already been defined in $f$. But this contradicts the
assumption that $f$ was $\ell$-sequential.
\end{proof}

\begin{DEF}\label{def:oraclefun}\label{def:Afun}
To any natural number $n$ and any partial function
$f\colon\{0,1\}^n\partialto\{0,1\}^n$ we associate its \emph{bit graph}
$\oraclefun nf$ as a partial function
$\oraclefun nf\colon\truefalse^{n+\log n}\partialto\truefalse$ in the
obvious way. More precisely, $\oraclefun nf(xv)$ is the $i$'th bit of
$f(x)$ if $f(x)$ is defined, and undefined otherwise, where 
$v$ is a string of length $\log n$ coding the natural number $i$.

If $f\colon\{0,1\}^n\to\{0,1\}^n$ is a total function, we define the
oracle $\Afun f$ by
$$
     \Afun f(w) \lra \oraclefun nf(w)=\true.
$$
\end{DEF}
Thus $\Afun f(w)$ can only hold for strings $w$ of length ${n+\log n}$.

Immediately from Definition~\ref{def:Afun} we note that $f$ can be
uniquely reconstructed from $\Afun f$.
If $\alpha$ is an oracle, we define $\Osection n \alpha$ by
$$
      \Osection n \alpha(w) \lra (\alpha(w) \wedge |w| = n+ \log n),
$$
so $\Osection n \alpha$ has finite support.

In what follows, circuits refer to oracle circuits as discussed
in Section \ref{s:cir-class}.  We are mainly interested in circuits
with no Boolean inputs, so the output depends only on the oracle.

\begin{Thm}\label{th:circuit-height-bound}
Let $C$ be any circuit of depth $d$ and size strictly less then $2^n$.
If $C(\alpha)$ correctly computes the last bit of
$f^\ell(0)$ for the (uniquely determined)
$f\colon\{0,1\}^n\to\{0,1\}^n$ such that
$\Afun f=\Osection n \alpha$, and this is true for all oracles $\alpha$,
then $\ell\leq d$.
\end{Thm}
\begin{proof}
Assume that such a circuit computes $f^\ell(0)$ correctly for all
oracles. We have to find an oracle that witnesses $\ell\leq d$. First
fix the oracle arbitrarily on all strings of length different from
$n+\log n$. So, in effect we can assume that the circuit only uses
oracle gates with $n+\log n$ inputs.

By induction on $k\ge 0$ we define partial functions
$f_k\colon\{0,1\}^n\partialto\{0,1\}^n$ with the following
properties. (Here we number the {\em levels} of the circuit
$0,1,\ldots,d-1$.)
\begin{itemize}
\item $f_0\subseteq f_1\subseteq f_2\subseteq\ldots$
\item The size $|\dom{f_k}|$ of the domain of $f_k$ is at most the
number of oracle gates in levels strictly smaller than $k$.
\item $\oraclefun n{f_k}$ determines the values of all oracle gates at levels
strictly smaller than $k$.
\item $f_k$ is $k$-sequential.
\end{itemize}
We can take $f_0$ to be the totally undefined function, since
$f^0(0) = 0$ by definition, so $f_0$ is $0$-sequential. As
for the induction step let $M$ be the set of all $x$ of length $n$
such that, for some $i<n$, the string $x\numeral i$ is queried by an oracle
gate at level $k$ and let $f_{k+1}$ be a $k{+}1$-sequential extension of $f_k$ to
domain $\dom{f_k}\cup M$ according to
Lemma~\ref{lem:ell-sequential-extension}.

For $k=d$ we get the desired bound. As $\oraclefun n{f_d}$ already
determines the values of all gates, the output of the circuit is
already determined, but $f^{d+1}(0)$ is still undefined and we can
define it in such a way that it differs from the last bit of the
output of the circuit.
\end{proof}

Inspecting the proof of Theorem~\ref{th:circuit-height-bound} we 
note that it does not at all use what precisely the
non-oracle gates compute, as long as the value only depends on the
input, not on the oracle. In particular, the proof still holds if we
consider subcircuits without oracle gates as a single complicated 
gate.
Thus we have the following corollary of the above argument and
part (ii) of Theorem \ref{t:order-b}.

\ignore{This has an important consequence, as the proof of
Theorem~\ref{t:order} shows that not only
$\csNLalpha\subseteq\AC^1(\alpha)$, but in fact, that $\csNLalpha$ can
be computed by circuits consisting of only constantly many layers of
oracle gates and oracle-independent subcircuits.
}

\begin{cor}
$\csNLalpha$ can iterate a function given by an oracle only constantly
far. In particular, there exists $\alpha$ such that $\csNLalpha$
is a strict subclass of $\AC^1(\alpha)$.
\end{cor}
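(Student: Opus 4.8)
The plan is to combine part~(ii) of Theorem~\ref{t:order-b} with Theorem~\ref{th:circuit-height-bound}, using the remark following the latter that the depth bound $d$ there need only count the nesting of $\alpha$-gates. First I would recall from Theorem~\ref{t:order-b}(ii) that every language in $\csNLalpha$ lies in $\AC^0(\STCONN,\alpha)$, and --- by inspecting that proof, where the depth of nesting of $\alpha$-queries is exactly the stack height $h$ --- that the witnessing $\AC^0(\STCONN,\alpha)$ circuit families have $\alpha$-nesting depth bounded by $h$. Since the value of an $\STCONN$-gate (and of any $\neg,\wedge,\vee$-gate) depends only on its inputs and not on $\alpha$, the remark after Theorem~\ref{th:circuit-height-bound} lets me collapse every maximal $\alpha$-free subcircuit into one ``complicated gate'', obtaining an equivalent circuit of no larger size whose only genuine gates are $\alpha$-gates, with nesting depth $\le h$. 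Theorem~\ref{th:circuit-height-bound}, whose proof uses nothing about the non-$\alpha$ gates, then applies with $d=h$, giving the first sentence: a fixed $\csNLalpha$ machine can compute the last bit of $f^{\ell}(0)$ for all $f\colon\{0,1\}^n\to\{0,1\}^n$ and all large $n$ only when $\ell\le h$.

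For the strict inclusion I would take the language
$$
 L \;=\; \{\,0^n : n\in\NN,\ \text{the last bit of } f_n^{\log n}(0) \text{ equals } 1\,\},
\qquad (f_n(x))_j := \alpha(x\numeral j)\ \ (0\le j<n),
$$
so $f_n\colon\{0,1\}^n\to\{0,1\}^n$ is well defined for every oracle $\alpha$ and is the function reconstructed from $\Osection n\alpha$ whenever the latter is a genuine bit graph. I claim $L\in\AC^1(\alpha)$ for \emph{every} $\alpha$: a single application of $f_n$ is a depth-$O(1)$ computation --- feed (copies of) the input wires $x$, together with the constant bits of $\numeral j$, into one $\alpha$-gate for each $j<n$ --- so $\log n$ applications started from $0^n$ compose, uniformly, into a circuit of depth $O(\log n)$, polynomial size, and one output bit.

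The remaining and principal task is to build a single $\alpha$ diagonalizing against all of $\csNLalpha$. I would enumerate the $\csNLalpha$ machines $\sM_1,\sM_2,\dots$ with their stack heights $h_1,h_2,\dots$, pass via Theorem~\ref{t:order-b}(ii) to corresponding $\AC^0(\STCONN,\alpha)$ circuit families of $\alpha$-depth $\le h_i$ and size $\le p_i(n)$, and construct $\alpha$ in stages, keeping $\alpha$ identically $0$ except on the ``sensitive'' lengths $n_i+\log n_i$ fixed below. At stage $i$ choose $n_i$ so large that (a)~$\log n_i>h_i$, (b)~$p_i(n_i)<2^{n_i}$, and (c)~$n_i+\log n_i$ exceeds the sizes $p_1(n_1),\dots,p_{i-1}(n_{i-1})$; then collapse $\sM_i$'s circuit on input $0^{n_i}$ as above and apply Theorem~\ref{th:circuit-height-bound} with $d=h_i<\log n_i=\ell$ to pick $\Osection{n_i}\alpha$ so that this circuit mis-computes the last bit of $f_{n_i}^{\log n_i}(0)$ --- extending the $h_i$-sequential partial function produced in that proof to a total $f_{n_i}$ whose $\log n_i$-th iterate has the wrong last bit, exactly as there. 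Condition~(c), applied at each stage, makes the stages non-interfering: $\sM_i$'s circuit, of size $\le p_i(n_i)<n_j+\log n_j$ for every $j>i$, never queries a length fixed at a later stage, and earlier circuits are likewise too small to query length $n_i+\log n_i$; thus the lengths on which the stage-$i$ fooling argument needs $\alpha$ pinned down are already $0$ or set at stages $<i$. For the resulting $\alpha$, machine $\sM_i$ disagrees with $L$ at $0^{n_i}$, so $L\notin\csNLalpha$ while $L\in\AC^1(\alpha)$; hence $\csNLalpha\subsetneq\AC^1(\alpha)$.

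The step I expect to be the real obstacle is exactly this last-paragraph bookkeeping: confirming that the fooling argument of Theorem~\ref{th:circuit-height-bound} constrains $\alpha$ on only one length, and that the sensitive lengths of different machines can be spaced so that no circuit ever queries a not-yet-committed sensitive length --- together with checking that the remark after Theorem~\ref{th:circuit-height-bound} genuinely licenses folding the $\STCONN$-gates (it does, since their outputs are oracle-independent). The rest is routine.
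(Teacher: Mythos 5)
Your proposal is correct and follows essentially the paper's intended argument for this corollary: Theorem~\ref{t:order-b}(ii) together with the remark preceding the corollary (collapsing the oracle-independent $\STCONN$/Boolean subcircuits into single ``complicated gates'') yields the constant iteration bound via Theorem~\ref{th:circuit-height-bound}, and a single separating oracle then comes from diagonalizing over the countably many machines. The stagewise construction you spell out is the one the paper carries out later in Theorem~\ref{t:separate} and Corollary~\ref{c:singlestrict} (there handling $\STCONN$ by embedding it into the oracle at non-sensitive lengths rather than folding its gates away), so you are essentially filling in the bookkeeping the paper defers to those results.
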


Having obtained a lower bound on the depth of an individual circuit,
it is a routine argument to separate the corresponding circuit
classes. In other words, we are now interested in finding one oracle
that simultaneously witnesses that the $\ACkalpha$-hierarchy is
strict. For the uniform classes this is possible by a simple
diagonalization argument; in fact, the only property of uniformity we
need is that there are at most countably many members in each
complexity class. So we will use this as the definition of
uniformity. It should be noted that this includes all the known
uniformity notions.

\begin{DEF}\label{def:Itlang}
If $g\colon\NN\to\NN$ is a function from the natural numbers to the
natural numbers, and $\alpha$ is an oracle, we define the
language
$$\Itlang \alpha g = \{ x \mid 
\begin{array}[t]{l}
\text{the last bit of $f^{g(n)}(0)$ is $\true$,} \\
\text{where $n=|x|$ and $f$ is such that $\Afun f = \Osection n \alpha$}\}
\end{array}
$$
\end{DEF}

We note that in Definition~\ref{def:Itlang} the function $f$ is uniquely
determined by the length $n$ of $x$ and the restriction of $\alpha$
to strings of length $n+\log n$.
Also, for logspace-constructible $g$ the language $\Itlang \alpha g$ can be
computed by logspace-uniform circuits (with oracle gates)
of depth $g(n)$ and size $n\cdot g(n)$.

Recall that a {circuit family} is a sequence $\{C_n\}_{n\in\NN}$ of circuits,
such that $C_n$ has $n$ inputs and one output and may have oracle gates.
The language of a circuit family $\{C_n\}_{n\in\NN}$ with oracle $\alpha$
 is the set of all strings $x\in\truefalse^\ast$ such that the output of
 $C_{|x|}(\alpha)$ with input $x$ is 1.

\begin{DEF}
A \emph{notion of uniformity} is any countable set $\mathcal U$ of circuit
families.

Let $\mathcal U$ be a notion of uniformity, and let $d,s\colon\NN\to\NN$
be functions. The $\mathcal U$-uniform $(d,s)$-circuit families are those
circuit families $\{C_n\}_{n\in\NN}$ of $\mathcal U$ such
that $C_n$ has oracle nested depth at most $d(n)$ and size at most $s(n)$.
\end{DEF}

We use a diagonal argument to obtain the following theorem.
\begin{Thm}\label{t:separate}
Let $\mathcal U$ be a notion of uniformity and let $\{d_k\}_{k\in\NN}$
be a family of
functions such that for all $k\in\NN$ the function $d_{k+1}$
eventually strictly dominates $d_k$. Moreover, let $\{s_k\}_{k\in\NN}$
be a family of
strictly subexponentially growing functions. Then there is a single
oracle $\alpha$ that simultaneously witnesses that
for all $k$, $\Itlang \alpha {d_{k+1}}$ cannot be computed by any
$\mathcal U$-uniform $(\bigO(d_k),\bigO(s_k))$-circuit family.
\end{Thm}
 \begin{proof}
 Let $\mathcal C^0,\mathcal C^1,\ldots$ be an enumeration of $\mathcal U$.
Let $(k_i,c_i,m_i)$ be an enumeration of all triples of natural numbers.

 We will construct natural numbers $n_i$, and oracles $\alpha_i$ such that
 the following properties hold.
 \begin{itemize}
 \item The $n_i$ strictly increase.
 \item $\alpha_i(w)$ holds at most for strings $w$ of length
$n_i+\log n_i $.
 \item If $\mathcal C^{m_i}=\{C^{m_i}_n\}_{n\in\NN}$ and $C^{m_i}_{n_i}$
has oracle depth at most $c_i\cdot d_{k_i}(n_i)$ and size at most
$c_i\cdot s_{k_i}(n_i)$ then the language of
$C^{m_i}_{n_i}$ with oracle $\bigvee_{j\leq i} \alpha_j$ differs from
$\Itlang{\bigvee_{j\leq i}\alpha_j}{d_{k_i+1}}$ at some string
of length $n_i$,
and $C^{m_i}_{n_i}$ contains no oracle gates with $n_{i+1}+\log n_{i+1} $
or more inputs.
\end{itemize}
Then $\bigvee_i\alpha_i$ will be the desired oracle $\alpha$.  For
suppose otherwise.  Then there is $k$ and a $(\bigO(d_k),\bigO(s_k))$ circuit
family $\mathcal C^{m}$ such that $\mathcal C^{m}$ with oracle $\alpha$
computes  $\Itlang \alpha {d_{k+1}}$.
Hence there is a triple $(k_i,c_i,m_i)$ such
that the circuit $C^{m_i}_{n_i}$ has oracle depth at most
$c_i\cdot d_{k_i}(n_i)$ and size at most $c_i\cdot s_{k_i}(n_i)$ and
$C^{m_i}_{n_i}(\alpha)$ correctly computes $\Itlang{\alpha}{d_{k_i+1}}$
on inputs of length $n_i$.  But this contradicts the above properties.

At stage $i$ take $n_i$ big enough so that it is bigger than all the
previous $n_j$'s and $n_i+\log n_i $ is bigger than the maximal
fan-in of all the oracle gates in all the circuits looked at so far;
moreover take $n_i$ big enough such that $c_i\cdot s_{k_i}(n_i)<2^{n_i}$
and $c_i\cdot d_{k_i}(n_i)<d_{k_i+1}(n_i)$. 
This is possible as $s_{k_i}$ has strictly subexponential growth and
$d_{k_i+1}$ dominates $d_{k_i}$ eventually.

Look at the $n_i$'th circuit in the circuit family $\mathcal C^{m_i}$,
and call it $C$. We may assume that $C$ has oracle depth at most
$c_i\cdot d_{k_i}(n_i)$ and size at most $c_i\cdot s_{k_i}(n_i)$ for
otherwise there is nothing to show and we can choose $\alpha_i$ to be
the empty set.

By Theorem~\ref{th:circuit-height-bound} we can find an oracle $\alpha_i$
whose support includes only strings of length $n_i+\log n_i $ such that
$C$ with oracle $\bigvee_{j\leq i}\alpha_i$ does not solve the
decision problem associated
with $f^{d_{k_i+1}(n_i)}(0)$ for $f$ given by $\Afun f= \alpha_i$.
\end{proof}

\begin{cor}\label{c:singlestrict}
There is a single oracle $\alpha \subseteq\truefalse^\ast$ for which 
\begin{equation}\label{e:strict}
   \AC^k(\alpha) \subseteq \NC^{k+1}(\alpha) \subsetneq \AC^{k+1}(\alpha),
  \mbox{ for all $k\ge 1$}
\end{equation}
and $\csNLalpha\subsetneq \AC^1(\alpha)$.
\end{cor}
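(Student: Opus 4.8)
The plan is to obtain the corollary from a single application of Theorem~\ref{t:separate}, choosing the function families $\{d_k\}$, $\{s_k\}$ and the notion of uniformity $\mathcal U$ so that the whole $\AC^k(\alpha)$ hierarchy \emph{and} the class $\csNLalpha$ are captured simultaneously, and invoking the remark after Theorem~\ref{th:circuit-height-bound} to fold $\csNLalpha$ into the same diagonalization. First I would record the non-strict inclusion $\AC^k(\alpha)\subseteq\NC^{k+1}(\alpha)$: replacing each unbounded fan-in Boolean gate by a balanced binary tree of depth $\bigO(\log n)$ turns a depth-$\bigO((\log n)^k)$ circuit with $\alpha$-gates into one of depth $\bigO((\log n)^{k+1})$ whose $\wedge,\vee$ gates have fan-in $2$, while leaving the nested depth of $\alpha$-gates equal to $\bigO((\log n)^k)=\bigO((\log n)^{(k+1)-1})$; by Definition~\ref{d:NCkalpha} this is an $\NC^{k+1}(\alpha)$ circuit. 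This is exactly the property for which $\NCkalpha$ was defined, so the verification is routine.

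Next I would instantiate Theorem~\ref{t:separate}. Take $\mathcal U$ to be the (countable) set of DLOGTIME-uniform circuit families whose gates are Boolean gates, $\alpha$-gates and $\STCONN$-gates; since the output of an $\STCONN$-gate depends only on its input wires, the remark after Theorem~\ref{th:circuit-height-bound} applies, so the proof of Theorem~\ref{t:separate} goes through for this $\mathcal U$ with ``oracle nested depth'' read as ``nested depth of $\alpha$-gates'' and maximal $\alpha$-free subcircuits (in particular those built from $\STCONN$-gates) treated as single complicated gates. Set $d_k(n)=(\log n)^k$ for $k\ge 0$ and $s_k(n)=2^{\sqrt n}$ for all $k$; then $d_{k+1}/d_k=\log n$ so $d_{k+1}$ eventually strictly dominates $d_k$, each $s_k$ is strictly subexponential, and each $s_k$ dominates every polynomial. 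Theorem~\ref{t:separate} then produces a single oracle $\alpha$ such that for every $k\ge 0$ the language $\Itlang{\alpha}{d_{k+1}}$ is computed by no $\mathcal U$-uniform $(\bigO(d_k),\bigO(s_k))$-circuit family.

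From this I would read off both halves of the statement. For $k\ge 1$: every $\NC^{k+1}(\alpha)$ family is DLOGTIME-uniform, of polynomial size (hence of size $\le s_k(n)$ for large $n$) and of $\alpha$-nested depth $\bigO((\log n)^k)=\bigO(d_k(n))$, so it is a $(\bigO(d_k),\bigO(s_k))$-family and cannot compute $\Itlang{\alpha}{d_{k+1}}$; but by the remark after Definition~\ref{def:Itlang} the language $\Itlang{\alpha}{(\log n)^{k+1}}$ is computed by a uniform (indeed DLOGTIME-uniform) circuit family of depth $(\log n)^{k+1}$ and polynomial size, i.e.\ by an $\AC^{k+1}(\alpha)$ family. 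Combined with the first paragraph this yields $\AC^k(\alpha)\subseteq\NC^{k+1}(\alpha)\subsetneq\AC^{k+1}(\alpha)$. For $\csNLalpha$: by Theorem~\ref{t:order-b}(ii) and the construction in its proof, every language in $\csNLalpha$ is computed by a DLOGTIME-uniform $\AC^0(\STCONN,\alpha)$ family whose nested depth of $\alpha$-gates equals the constant stack height, hence is $\bigO(1)=\bigO(d_0(n))$, and whose size is polynomial; such a family lies in $\mathcal U$ and is a $(\bigO(d_0),\bigO(s_0))$-family, so $\Itlang{\alpha}{\log n}\notin\csNLalpha$, whereas $\Itlang{\alpha}{\log n}\in\AC^1(\alpha)$ by the same remark and $\csNLalpha\subseteq\AC^1(\alpha)$ by Theorem~\ref{t:order-a}. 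Hence $\csNLalpha\subsetneq\AC^1(\alpha)$.

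The step I expect to require the most care is the $\csNLalpha$ case: one must verify that the construction behind $\csNLalpha\subseteq\AC^0(\STCONN,\alpha)$ genuinely delivers circuit families meeting the hypotheses of Theorem~\ref{t:separate} --- that the $\STCONN$-computations can legitimately be absorbed into the oracle-independent complicated gates of the remark so that the nested depth of $\alpha$-gates really is $\bigO(1)=\bigO(d_0)$, that all such families sit inside one countable $\mathcal U$, and that $d_1=\log n$ does eventually strictly dominate $d_0=1$ so that $k=0$ is an admissible index in Theorem~\ref{t:separate}. Everything else (the balanced-tree simulation, the depth/size/uniformity bookkeeping for the circuits computing $\Itlang{\alpha}{g}$, and checking that polynomial size is bounded by $s_k$) is routine.
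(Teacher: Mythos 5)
Your proposal is correct, and its backbone --- instantiating Theorem~\ref{t:separate} with $d_k(n)=(\log n)^k$ and a strictly subexponential size bound, then getting $\NC^{k+1}(\alpha)\subsetneq\AC^{k+1}(\alpha)$ because $\NC^{k+1}(\alpha)$ families have $\alpha$-nested depth $\bigO(d_k)$ and polynomial size while $\Itlang{\alpha}{d_{k+1}}\in\AC^{k+1}(\alpha)$, plus the routine inclusion $\AC^k(\alpha)\subseteq\NC^{k+1}(\alpha)$ --- is exactly the paper's argument. Where you genuinely diverge is the $\csNLalpha$ half. You enlarge the circuit model and the notion of uniformity so that $\STCONN$-gates are allowed as oracle-independent ``complicated gates'' and rerun the diagonalization of Theorem~\ref{t:separate} in that extended model, citing the remark after Theorem~\ref{th:circuit-height-bound}; the paper instead keeps the circuit model unchanged and modifies the construction by fixing $\alpha$ in advance to be a padded copy of $\STCONN$ living only on lengths different from $n+\log n$, so that $\ACZ(\STCONN,\alpha)$ collapses into $\ACZ(\alpha)$ for the constructed $\alpha$ and the diagonalization at level $d_0$ already yields $\Itlang{\alpha}{d_1}\notin\ACZ(\STCONN,\alpha)\supseteq\csNLalpha$. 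Each route requires a small re-inspection of the proof of Theorem~\ref{t:separate}: yours, that countability, the fan-in bookkeeping and Theorem~\ref{th:circuit-height-bound} survive the extra gate type (they do, since $\STCONN$-gates never query $\alpha$, and you correctly read ``oracle nested depth'' as nested depth of $\alpha$-gates only); the paper's, that pre-committing $\alpha$ outside the lengths $n_i+\log n_i$ does not disturb the stage-by-stage construction. What each buys: your version keeps the oracle supported only on the diagonalization lengths, is in the spirit of the paper's own corollary following Theorem~\ref{th:circuit-height-bound}, and generalizes at once to any class of the form $\ACZ(B,\alpha)$ with $B$ oracle-independent; the paper's version lets Theorem~\ref{t:separate} be used essentially as stated, at the price of an oracle that also encodes $\STCONN$. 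Two cosmetic differences are immaterial: the paper takes $\mathcal U$ to be logspace uniformity (which subsumes your DLOGTIME-uniform families; note that both you and the paper ultimately rely on the iteration circuits for $\Itlang{\alpha}{g}$ being uniform enough to place the hard language in $\AC^{k+1}(\alpha)$, a point treated informally in both), and it uses $s_k(n)=2^{\lceil n/2\rceil}$ where you use $2^{\sqrt n}$.
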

\begin{proof}
In Theorem \ref{t:separate} let $\mathcal U$ be log space uniformity
and let $d_k(n) = \log^k n$ and $s_k(n) = 2^{\lceil n/2\rceil}$.  
Then for $k\ge 1$, every problem in $\AC^k(\alpha)$ can be
computed by a $\mathcal U$-uniform $(\bigO(d_k),\bigO(s_k))$-circuit family
with oracle $\alpha$, and $\Itlang \alpha {d_{k}}$ is in
$\AC^k(\alpha)$.  Then Theorem \ref{t:separate} shows that there
is a single oracle $\alpha$ satisfying (\ref{e:strict}).

To show $\csNLalpha\subsetneq \AC^1(\alpha)$, by Theorem
\ref{t:order-b} part (ii), it suffices to show
\begin{equation}\label{e:suffice}
\Itlang \alpha {d_{1}} \notin \ACZ(\STCONN,\alpha).
\end{equation}
Theorem \ref{t:separate} shows how to construct $\alpha$ so
$\Itlang \alpha {d_{1}} \notin \ACZ(\alpha)$, and we can modify
the proof by starting with $\alpha$ which is a version of $\STCONN$
in which no string has length $n + \log n$ for any $n$.
The result of the construction in the proof satisfies (\ref{e:suffice})
as well as (\ref{e:strict}).
\end{proof}


\section{Theories for Relativized Classes}
\label{s:theory}

\subsection{Two-Sorted Languages and Complexity Classes }

Our theories are based on a two-sorted vocabulary, and it is
convenient to re-interpret the complexity classes using this
vocabulary \cite{Cook:Nguyen,Nguyen:Cook:05:lmcs}.
Our two-sorted language has variables
$x,y,z,...$ ranging over $\N$ and variables $X,Y,Z,...$ ranging over
finite subsets of $\N$ (interpreted as bit strings).  Our basic two-sorted
vocabulary \LTwoA\ includes the usual symbols
$0,1,+,\cdot,=,\le$ for arithmetic over $\N$, the length function
$|X|$ on strings, the set membership relation $\in$, and string
equality $=_2$ (where we usually drop mention of the subscript 2).
The function $|X|$ denotes 1 plus the largest element in the set $X$,
or 0 if $X$ is empty (roughly the length of the corresponding string).
We will use the notation $X(t)$ for $t\in X$, and we will think of $X(t)$ as
the $t$-th bit in the string $X$.

{\em Number terms} of \LTwoA\
are built from the constants 0,1, variables $x,y,z,...$,
and length terms $|X|$, using $+$ and $\cdot$.  The only {\em string terms}
are string variables $X,Y,Z,...$.   The atomic formulas are
$t=u$, $X=Y$, $t\le u$, $t\in X$ for any number terms $t,u$ and string
variables $X,Y$.  Formulas are built from atomic formulas using
$\wedge,\vee, \neg$ and both number and string quantifiers
$\exists x, \exists X, \forall x,\forall X$.  Bounded number quantifiers
are defined as usual, and the bounded string quantifier
\mbox{$\exists X \le t \  \varphi$} stands for $\exists X(|X|\leq t \wedge \varphi)$
and $\forall X\le t \ \varphi$ stands for $\forall X(|X|\le t\supset \varphi)$,
where $X$ does not occur in the term $t$.

\SigZB\ is the set of all \LTwoA-formulas in which all number quantifiers
are bounded and with no string quantifiers.  \SigOneB\
(corresponding to {\em strict} $\Sigma^{1,b}_1$ in \cite{Krajicek:95:book})
formulas begin with
zero or more bounded existential string quantifiers, followed by a \SigZB\
formula.  These classes are extended to \SigIB, $i\ge 2$,
(and \PiIB, $i\ge 0$) in the usual way.

We use the notation \SigZB(\calL) to denote \SigZB\ formulas which
may have two-sorted function and predicate symbols from the vocabulary
\calL\ in addition to the basic vocabulary \LTwoA.

Two-sorted complexity classes contain relations $R(\xvec,\Xvec)$
(and possibly number-valued functions $f(\xvec,\Xvec)$ or
string-valued functions $F(\xvec,\Xvec)$), where the arguments
$\xvec = x_1,\ldots,x_k$ range over $\N$, and $\Xvec = X_1,\ldots,X_\ell$
range over finite subsets of $\N$.  In defining complexity classes
using machines or circuits, the number arguments $x_i$ are presented in unary
notation (a string of $x_i$ ones), and the arguments $X_i$ are
presented as bit strings.  Thus the string arguments are the important
inputs, and the number arguments are small auxiliary inputs
useful for indexing the bits of strings.

As mentioned before, uniform \ACZ\ has several equivalent
characterizations \cite{Immerman:99:book}, including \LTH\
(the log time hierarchy on alternating Turing machines) and \FO\
(describable by a first-order formula using predicates for
plus and times).  Thus in the two-sorted setting we can
define \ACZ\ to be the class of relations
$R(\xvec,\Xvec)$ such that some alternating Turing
machine accepts $R$ in time $O(\log n)$ with a constant number of
alternations, using the input conventions for numbers and strings
given above.  Then from the \FO\ characterization of \ACZ\
we obtain the following nice connection between the classes \ACZ\ and
\NP\ and our two-sorted \LTwoA-formulas (see Theorems IV.3.6 and IV.3.7
in \cite{Cook:Nguyen}).

\begin{theorem}[Representation Theorem]\label{t:repres}
\label{t:repre}
A relation $R(\xvec,\Xvec)$ is in \ACZ\ (resp. \NP) iff it is
represented by some \SigZB\ (resp. \SigOneB) formula
$\varphi(\xvec,\Xvec)$.
\end{theorem}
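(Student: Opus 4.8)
The plan is to reduce both equivalences to the classical descriptive-complexity facts $\ACZ = \FO[+,\times]$ under DLOGTIME-uniformity (equivalently $\ACZ = \LTH$, by Barrington--Immerman--Straubing) and Fagin's theorem $\NP = \exists\mathrm{SO}\cdot\FO$, and then to verify that the translation between the one-sorted finite-model-theory formalism and our two-sorted language $\LTwoA$ is faithful — this is exactly the content of Theorems IV.3.6 and IV.3.7 of \cite{Cook:Nguyen}. Under our input conventions a relation $R(\xvec,\Xvec)$ is viewed as a family of finite structures whose universe is an initial segment $\{0,\dots,n-1\}$ of $\N$, with the string arguments $X_i$ supplying the input predicates (accessed via $t\in X_i$) and with $+,\cdot,\le$ — hence also $\mathrm{BIT}$ and the pairing function, which are $\SigZB$-definable from these — available as built-in numeric relations. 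Under this dictionary a first-order sentence with only number quantifiers over such structures is precisely a $\SigZB(\LTwoA)$ formula: bounded number quantifiers $\exists x\le t$, $\forall x\le t$ are the first-order quantifiers ranging over the universe, the arithmetic atoms translate verbatim, and conversely every $\SigZB$ formula has this shape. So the $\ACZ$ case drops out of $\ACZ = \FO[+,\times]$.

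For a self-contained argument in the $\ACZ$ case one can instead work directly from the $\LTH$ definition adopted in the paper: an alternating Turing machine running in time $O(\log n)$ with a constant number $r$ of alternations. Its $r$ blocks of alternating existential/universal guesses, each coding a configuration in $O(\log n)$ bits, i.e.\ a polynomially bounded number, translate into $r$ alternating blocks of bounded number quantifiers; and the final deterministic logtime segment that checks one branch — computing a constant number of input-bit addresses from the guessed configurations and reading those bits — is describable by an open $\LTwoA$ formula (with bounded number quantifiers for the address arithmetic and $\mathrm{BIT}$) that uses $\in$ to access the string inputs. This produces a $\SigZB$ formula representing $R$. Conversely, given a $\SigZB$ formula, an ATM guesses witnesses for its bounded quantifiers with the matching pattern of alternations and then evaluates the quantifier-free matrix in logtime, reading only the relevant few bits of the string inputs; hence $R\in\LTH=\ACZ$.

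The $\NP$ case is the two-sorted form of Fagin's theorem. If $R$ is $\SigOneB$, say $R(\xvec,\Xvec)\liff\exists Y_1\le t_1\cdots\exists Y_m\le t_m\,\psi$ with $\psi\in\SigZB$, then an NP machine guesses the $Y_j$ (each of polynomial length, since $t_j$ is a term in the unary-coded numbers and the lengths $|X_i|$) and verifies $\psi$ in polynomial time using the $\ACZ\subseteq\Ptime$ direction just established. Conversely, if an NTM $\sM$ decides $R$ in time $n^c$, encode a full accepting computation of $\sM$ on $(\xvec,\Xvec)$ as a single string $Y$ of polynomial length; the predicate ``$Y$ codes an accepting computation of $\sM$'' is the usual local-consistency check on consecutive configurations, hence is in $\ACZ$, hence by the first part is represented by some $\SigZB$ formula $\psi(\xvec,\Xvec,Y)$, and $R(\xvec,\Xvec)\liff\exists Y\le t\,\psi$ is $\SigOneB$.

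I expect the main obstacle to be the bookkeeping in the $\ACZ$ direction: pinning down the identification of the DLOGTIME/$\LTH$ definition of $\ACZ$ with $\SigZB(\LTwoA)$ with full precision — in particular that $\mathrm{BIT}$ and the pairing functions needed to index machine configurations and input positions are genuinely $\SigZB$-definable from $+,\cdot,\le$ over initial segments of $\N$, and that the unary coding of the number inputs versus the binary coding of the string inputs is handled consistently on both sides of the equivalence. Once that and the polynomial witness-length bound are in hand, the $\NP$ direction is routine.
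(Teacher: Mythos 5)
Your proposal is correct and follows essentially the same route as the paper, which simply invokes the $\FO$/$\LTH$ characterization of uniform $\ACZ$ and cites Theorems IV.3.6 and IV.3.7 of \cite{Cook:Nguyen}; your $\LTH$-translation for the $\SigZB$ case and the computation-encoding argument for the $\SigOneB$ case are exactly the proofs behind those cited results. No substantive difference to report.
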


In general, if {\bf C} is a class of relations (such as \ACZ) then
we want to associate a class {\bf FC} of functions with {\bf C}.  Here
{\bf FC} will contain string-valued functions $F(\xvec,\Xvec)$ and
number-valued functions $f(\xvec,\Xvec)$.  We require that
these functions be $p$-bounded; i.e. for each $F$ and $f$ there
is a polynomial $g(n)$ such that $|F(\xvec,\Xvec)|\le
g(\max(\xvec,|\Xvec|))$ and $f(\xvec,\Xvec) \le g(\max(\xvec,|\Xvec|))$.

We define the {\em bit graph} $B_F(i,\xvec,\Xvec)$ to hold iff
the $i$th bit of $F(\xvec,\Xvec)$ is one.  Formally
\begin{equation}\label{d:bit-graph}
   B_F(i,\xvec,\Xvec) \lra F(\xvec,\Xvec)(i)
\end{equation}
(Compare this with Definition \ref{def:Afun}.)

\begin{definition}
If {\bf C} is a two-sorted complexity class of relations, then
the corresponding function class {\bf FC} consists of all p-bounded
number functions whose graphs are in {\bf C}, together with all
p-bounded string functions whose bit graphs are in {\bf C}.
\end{definition}

For example, binary addition $F_+(X,Y) = X+Y$ is in \FACZ,
but binary multiplication $F_\times(X,Y) = X\cdot Y$ is not.

\begin{definition}\label{d:SigZB-def}
A string function is {\em $\SigZB$-definable} from a collection $\calL$
of two-sorted functions and relations if it is p-bounded and its bit graph
is represented by a $\SigZB(\calL)$ formula.
Similarly, a number function is {\em $\SigZB$-definable} from $\calL$ if
it is p-bounded and its graph is represented by a $\SigZB(\calL)$ formula.
\end{definition}

It is not hard to see that \FACZ\ is closed under \SigZB-definability,
meaning that if the bit graph of $F$ is represented by a \SigZB(\FACZ)
formula, then $F$ is already in \FACZ.   Of course the set of functions
in \FACZ\ is closed under composition, but for a general vocabulary
$\calL$ the set of functions $\SigZB$-definable from $\calL$ may
not be closed under composition.  For example if a relation $\alpha(Y)$
codes the bit graph of a function $F$ then $F$
could be $\SigZB$-definable from $\calL\cup\{\alpha\}$ but $F\circ F$
may not be.  In order to define
complexity classes such as $\ACZ(m)$ and \TCZ, as well as relativized
classes such as \ACZalpha, we need to iterate
\SigZB-definability to obtain the notion of \ACZ\ reduction.

\begin{definition}\label{d:reducible}
We say that a string function $F$ (resp. a number function $f$)
is \ACZ-reducible to $\calL$ if there is a
sequence of string functions $F_1, \ldots, F_n$ ($n \ge 0$) such that
\begin{equation}
\label{e:AC0-red}
F_i
\text{ is \SigZB-definable from }\calL \cup \{F_1, \ldots, F_{i-1}\},
\text{ for }i = 1, \ldots, n;
\end{equation}
and $F$ (resp. $f$) is $\SigZB$-definable from
$\calL \cup \{F_1, \ldots, F_n\}$.
A relation $R$ is $\ACZ$-reducible to $\calL$ if there is a
sequence $F_1, \ldots, F_n$ as above, and $R$ is represented by a
$\SigZB(\calL \cup \{F_1, \ldots, F_n\})$ formula.
\end{definition}

If $F$ and $G$ are string-to-string functions in $\calL$ then
the term $F(G(X))$ can appear in $\SigZB(\calL)$ formulas,
so the set of functions \ACZ-reducible to $\calL$ is always closed
under composition.  In fact from the techniques used to prove
Theorem \ref{t:repre} we can show that
$F$ is \ACZ-reducible to $\calL$ iff there is a uniform
constant-depth polysize circuit family that computes $F$, where
the circuits are allowed gates (each of depth one) which
compute the functions and predicates in $\calL$ (as well as the
Boolean connectives).

The (two-sorted) classes $\ACZ(m), \TCZ, \NCOne, \L$ and $\NL$ are the closure under \ACZ-reductions
of their respective complete problems, so they become
$\ACZ(\modm)$, $\ACZ(\THRESH)$, $\ACZ(\FormVal)$, $\ACZ(\OneSTCONN)$
and $\ACZ(\STCONN)$.
The relativized versions $\ACZ(\alpha)$, $\ACZ(m,\alpha)$, $\TCZ(\alpha)$,
$\NC^k(\alpha)$, and $\AC^k(\alpha)$ of the circuit classes are all
closed under $\ACZ$-reductions, and so is $\csL(\alpha)$
(Theorem \ref{t:order-b} part (i)), but $\csNLalpha$ may not be so closed.


\subsection{Nonrelativized Theories}\label{s:nonrel-theories}

In this paper we consider theories $\calT$ over two-sorted vocabularies
which contain \LTwoA.  

\begin{definition}\label{d:definable}
If $F(\xvec,\Xvec)$ is a string function,
we say that $F$ is \SigOneB-{\em definable} (or {\em  provably total})
in \calT\ if there is a \SigOneB\ formula $\varphi(\xvec,\Xvec,Y)$ which
represents the graph of $F$ and 
$$
    \calT \vdash \exists! Y \varphi(\xvec, \Xvec,Y)
$$
where  $\exists! Y$ means there exists a unique $Y$.
\end{definition}

A similar definition applies to for number functions $f(\xvec,\Xvec)$.
When we associate a theory \calT\ with a complexity class $\boldC$
we want the provably total functions in \calT\ to coincide with
the functions in $\FC$. 

The theory \VZ\ (essentially $\SigZp$-{\it comp} in \cite{Zambella:96:jsl},
and $\mathbf{I\Sigma}_0^{1,b}$ (without \#) in \cite{Krajicek:95:book})
is the theory over \LTwoA\ that is axiomatized
by the axioms listed in Figure \ref{d:2-BASIC} together with the 
comprehension axiom scheme \COMP{\SigZB},
i.e. the set of all formulas of the form
\begin{equation} \label{e:COMP}
   \exists X\leq y\forall z<y(X(z)\lra \varphi(z)),
\end{equation}
where $\varphi(z)$ is any formula in $\SigZB$,
and $X$ does not occur free in $\varphi(z)$.

\begin{figure}
\centering
\begin{tabular}{|ll|}
\hline
{\bf B1.} $x+1\neq 0$                           & {\bf B7.} $(x\le y \wedge y\le x)\supset x = y$\\
{\bf B2.} $x+1 = y+1\supset x=y$                & {\bf B8.} $x \le x + y$\\
{\bf B3.} $x+0 = x$                             & {\bf B9.} $0 \le x$\\
{\bf B4.} $x+(y+1) = (x+y) + 1$                 & {\bf B10.} $x\le y \vee y\le x$\\
{\bf B5.} $x\cdot 0 = 0$                        & {\bf B11.} $x\le y \lra x< y+1$ \\
{\bf B6.} $x\cdot (y+1) = (x\cdot y) + x$       & {\bf B12.} $x\neq 0 \supset \exists y\le x(y+1 = x)$\\
{\bf L1.} $X(y) \supset y < |X|$                & {\bf L2.}  $y+1 = |X| \supset X(y)$ \\
\multicolumn{2}{|l|}
{{\bf SE.} $[|X| = |Y|\wedge\forall i<|X|(X(i)\lra Y(i))]\ \supset\ X = Y$}\\
\hline
\end{tabular}
\caption{2-{\bf BASIC}}
\label{d:2-BASIC}
\end{figure}

We associate $\VZ$ with the complexity class \ACZ, and indeed the
provably total functions of \VZ\ comprise the class \FACZ.
All theories considered in this paper extend \VZ.


In \cite[Chapter IX]{Cook:Nguyen},
for various subclasses \boldC\ of \Ptime, a theory \VC\ is developed
which is associated with \boldC\ as above, so the provably total
functions of \VC\ are precisely those in \FC.
Essentially, the theory \VC\ is axiomatized by the axioms of \VZ\ together with
an axiom that states the existence of a polytime computation for a
complete problem of \boldC, assuming the parameters as given inputs.
The additional axioms for the classes of interest in this paper will be listed below.
For the logspace classes (i.e., $\ACZm,\ldots, \NL$) we use roughly the same
problems as those mentioned in the previous sections.
For other classes in the \AC\ hierarchy we use the monotone circuit value problem,
with appropriate restrictions on the depth and fanin of the circuits.

To formulate these axioms we introduce the pairing function $\tuple{y,z}$,
which stands for the term $(y+z)(y+z+1)+2z$. This allows us to interpret
a string $X$ as a two-dimensional bit array, using the notation
\begin{equation}\label{e:bit-array}
   X(y,z) \equiv X(\tuple{y,z})
\end{equation}
For example, a graph with $a$ vertices can be encoded by a pair $(a,E)$ where
$E(u,v)$ holds iff there is an edge from $u$ to $v$, for $0\le u, v < a$.

We will also use the number function $(Z)^x$ which is the $x$-th element of the sequence of numbers encoded by $Z$:
\begin{multline*}
y=(Z)^x \lra (y < |Z| \wedge Z(x,y) \wedge \forall z < y \neg Z(x,z))
 \vee\\
(\forall z < |Z| \neg Z(x,z) \wedge y = |Z|)
\end{multline*}
In addition, $\log{a}$, or $|a|$, denotes the integral part of
$\log_2(a+1)$.
Note that these function is provably total in \VZ.
Note also that the functions $(Z)^x$ and $|a|$ can be eliminated using their \SigZB\ defining axioms,
so that \SigZB\ formulas that contain these functions are in fact equivalent to \SigZB\ formulas over \LTwoA\
(see \cite[Lemma V.4.15]{Cook:Nguyen}).

We now list the additional axioms for the classes considered in this paper.
(Recall that the base theory is always \VZ.)
First, consider \TCZ.
The theory \VTCZ\ is axiomatized by the axioms of \VZ\ and the following axiom:
$$\NUMONES \equiv  \exists Y \le 1 + \tuple{x,x} \delNum(x,X,Y)$$
where
\begin{multline*}
\delNum(x,X,Y)\equiv
 \ (Y)^0 = 0\ \wedge\\
 \forall z<x\, \left[(X(z) \supset (Y)^{z+1} = (Y)^z + 1) \wedge (\neg
 X(z) \supset (Y)^{z+1} = (Y)^z)\right]
\end{multline*}
Here \NUMONES\ formalizes a computation of the number of 1-bits in the string $X(0), X(1),\ldots, X(x-1)$:
for $1\le z \le x$, $(Y)^z$ is the number of 1 bits in $X(0), X(1), \ldots, X(z-1)$.
(Note that computing the number of 1-bits in the input string is roughly the same as computing
the threshold function.)

Now consider \ACZm.
The additional axiom for the theory \VZm\ associated with \ACZm\ is
$$
\MOD_m \equiv  \exists Y \delta_{\MOD_m}(x,X,Y)
$$
where
\begin{multline*}
\delta_{\MOD_m}(x,X,Y)\equiv
Y(0,0)\ \wedge\\ 
 \forall z<x\,\left[(X(z)\supset (Y)^{z+1} = (Y)^z + 1\mod{m})\wedge(\neg X(z) \supset (Y)^{z+1} = (Y)^z)\right]
\end{multline*}
Similar to \NUMONES\ above,
here $(Y)^z$ is the number of 1-bits in the sequence
$X(0), \ldots, X(z-1)$ modulo $m$.

For \NCOne, the additional axiom \MFV\ for \VNCOne\ states the existence of a polytime computation
for the Balanced Monotone Sentence Value Problem which is complete
for \VNCOne\ \cite{Buss:87:stoc}:
\begin{equation}\label{e:MFV}
\MFV \ \equiv\  \exists Y\, \delMfvp(a,G,I,Y)
\end{equation}
where
\begin{multline*}
\delMfvp(a,G,I,Y) \equiv \
\forall x<a\,[Y(x+a) \lra I(x) \wedge\\
	0 < x\ \supset\ Y(x) \lra \left(\big(G(x) \wedge Y(2x) \wedge Y(2x+1)\big) \vee \right.\\
        \left. \big(\neg G(x) \wedge (Y(2x) \vee Y(2x+1))\big)\right)
\end{multline*}
Here the balanced monotone sentence is viewed as a balanced binary tree encoded by $(a,G)$
and $I$ specifies the leaves of the tree: node $x$'s children are $2x$ and $2x+1$,
$G(x)$ indicates whether node $x$ is an $\vee$ or $\wedge$ node, and $I(z)$ is the
value of leaf $z$.
$Y$ is the bottom-up evaluation of the sentence: $Y(x)$ is the value of node $x$.

Next, the theory \VNL\ for \NL\ is axiomatized by \VZ\ and the following axiom
$$\CONN \equiv  \exists Y\, \delConn(a, E, Y)$$
where $\delConn(a,E,Y)$ states that $Y$ encodes a polytime computation for the following problem,
which is equivalent to \STCONN\ under \ACZ-many-one reductions:
on input $(a,E)$ which encodes a directed graph, compute the nodes that are reachable from node 0.
Here $Y(z,x)$ holds iff there is a path from 0 to $x$ of length $\le z$;
more precisely,
\begin{multline*}
\delConn(a,E,Y) \equiv Y(0,0) \wedge \forall x < a (x \neq 0 \supset \neg Y(0,x))\ \wedge \\
 \forall z < a \forall x<a\, \left[Y(z+1, x) \lra (Y(z,x) \vee \exists
 y<a,\ Y(z,y) \wedge E(y,x))\right].
\end{multline*}

For \L, the additional axiom in \VL\ is \PATH, which states the existence of a polytime computation
for the following problem, which is equivalent to \OneSTCONN\ under \ACZ\ reductions:
on input a directed graph with out degree at most one which is encoded by $(a,E)$,
compute the transitive closure of vertex 0:
\begin{multline*}
\PATH \ \equiv\ \forall x < a \exists! y < a E(x,y)\ \supset\\
\exists P\, \left[(P)^0 = 0 \wedge \forall v < a
\big((P)^{v+1} < a \wedge E((P)^v,(P)^{v+1})\big)\right]
\end{multline*}
($(P)^v$ is the vertex of distance $v$ from $0$.)

Now we consider the classes \ACk\ for $k \ge 1$.
We use the monotone circuit value problem under an appropriate setting.
In particular, the circuit has unbounded fanin, and its depth is
$(\log n)^k$, where $n$ is the number of its inputs.
Thus the additional axiom in \VACk\ states the existence of a polytime evaluation $Y$
for a circuit of this kind which is encoded by $(a,E, G, I)$:
\begin{equation}\label{e:axVACk}
 \exists Y\, \delLmcv(a,|a|^k,E,G,I,Y)
\end{equation}
where the depth parameter $d$ is set to $|a|^k$ in the formula
\begin{multline*}
\delLmcv(w,d,E,G,I,Y) \equiv \forall x < w \forall z < d,\ (Y(0,x) \lra I(x))\ \wedge  \\
Y(z+1, x) \lra
        \big[[G(z+1,x) \wedge \forall u < w\,(E(z, u, x) \supset Y(z,u))] \vee\\
        [\neg G(z+1,x) \wedge \exists u < w\,( E(z, u, x) \wedge
        Y(z,u))]\big]
\end{multline*}
The formula $\delLmcv(w,d,E,G,I,Y)$ (Layered Monotone Circuit Value)
states that $Y$ is an evaluation of the
circuit encoded by $(w,d,E,G)$ on input $I$.
The circuit is encoded as follows.
There are $(d+1)$ layers in the circuit, each of them contains $w$ gates.
Hence each gate is given by a pair $(z,x)$ where $z$ indicates the layer
(inputs to the circuits are on layer 0 and outputs are on layer $d$),
and $x$ is the position of the gate on that layer.
$E$ specifies the wires in the circuit: $E(z,u,x)$ holds if and only if gate $(z,u)$
is an input to gate $(z+1,x)$,
and $G$ specifies the gates: $G(z,x)$ holds if gate $(z,x)$ is an $\wedge$ gate,
otherwise it is an $\vee$ gate.
Bit $Y(z,x)$ is the value of gate $(z,x)$.


For \NCk\ ($k \ge 2$) the circuit value problem is restricted further, so that the circuit's fanin is at most 2.
We express this condition by the formula $\Fanin(w,d,E)$,
where $(w,d,E)$ encodes the underlying graph of the circuit as above:
\begin{multline*}
\Fanin(w,d,E)\ \equiv\ \forall z < d \forall x < w\exists u_1 < w \exists u_2 < w \forall v < w\\ 
	E(z,v,x) \supset (v = u_1 \vee v = u_2)
\end{multline*}
Similar to \VACk, the theory \VNCk\ (for $k\ge 2$) is axiomatized by the axioms of \VZ\ together with
\begin{equation}\label{e:axVNCk}
(\Fanin(a,|a|^k,E) \supset \exists Y\, \delLmcv(a,|a|^k,E,G,I,Y))
\end{equation}


The connection between the above theories \VACk\ and \VNCk\
and their corresponding classes 
is discussed in detail in \cite[Section IX.5.6]{Cook:Nguyen}.
The key point for these theories (as well as the others) is that the
problem of witnessing the existential
quantifiers in the axiom for each theory
(in this case (\ref{e:axVACk})   and (\ref{e:axVNCk})) is
complete for the associated complexity class.
For \ACk\ and \NCk\ this is shown by using the characterization of these
classes in terms of alternating Turing machines.

In the remainder of this section we will present relativized theories $\VC(\alpha)$
that characterize the relativized classes discussed in Section \ref{s:class}.

\subsection{Relativized Theories}\label{s:rel-theories}

\subsubsection{Classes with bounded nested oracle depth}
\label{s:bounded_depth}

We first look at
$\ACZ(m,\alpha)$, $\TCZ(\alpha)$, $\NCOne(\alpha)$ and $\csLalpha$.
These classes have constant nested depth of oracle gates,
and they are the \ACZ-closure of the oracle $\alpha$ and an appropriate complete problem
(see Proposition \ref{p:complete} and Theorem \ref{t:order-b} (i)).
We can treat these as the $\ACZ(\alpha)$-closure of the complete problem
for their respective nonrelativized version.
Thus the development in \cite[Chapters VIII and IX]{Cook:Nguyen} can be readily extended to these classes.
The change we need to make here is to replace the base theory \VZ\ by its relativized version, 
$\VZ(\alpha)$, which is axiomatized by comprehension axioms
(\ref{e:COMP}) over
$\SigZB(\alpha)$ formulas instead of just $\SigZB$ formulas.

First note that a sequence of strings can be encoded using the
string function $\Row$, where $\Row(x,Z)$ extracts row $x$ from
the array coded by $Z$.  Thus
\begin{equation}\label{e:Rowdefine}
   \Row(x, Z)(i)\lra i<|Z|\wedge Z(x,i)
\end{equation}
We will also write $Z^{[x]}$ for $\Row(x,Z)$.

\Notation.
For a predicate $\alpha$ we use $\LTwoA(\alpha)$ to denote
$\LTwoA\cup\{\Row,\alpha\}$,
and we use $\SigZBalpha$ and $\SigOneB(\alpha)$
to denote the classes  $\SigZB(\Row,\alpha)$ and $\SigOneB(\Row,\alpha)$,
respectively.   Definitions \ref{d:SigZB-def}, \ref{d:reducible},
and \ref{d:definable}
of \SigZB-definable, \ACZ-reducible, and \SigOneB-definable in a
theory, are extended in the obvious way to
$\SigZB(\alpha)$-definable, $\ACZ(\alpha)$-reducible, and
$\SigOneB(\alpha)$-definable in a theory.

Atomic formulas containing $\alpha$ have the form $\alpha(T)$,
where $T$ is a string term; namely either a variable $X$ or a term
$\Row(t, T')$ for terms $t,T'$.

Notice that while the string function $\Row$ can occur nested in a $\SigZB(\alpha)$ formula,
the predicate $\alpha$ cannot.
Thus a $\SigZB(\alpha)$ formula represents relations computable by a family
of polynomial size $\ACZ(\alpha)$ circuits whose oracle nested depth is one.

The function $\Row$ is useful in constructing formulas describing
circuits which query the oracle $\alpha$.  For example if an $n$-ary
gate $g$ has inputs from $n$ different $\alpha$ gates, we can code the
sequence of inputs to the $\alpha$ gates
using a string $X$, so the $i$th input bit to $g$ is $\alpha(X^{[i]})$.

\ignore{
The next theorem can be proved in the same way as Theorem \ref{t:repre}.

\begin{theorem}
A relation $R(\xvec,\Xvec)$ is in $\ACZ(\alpha)$ iff it is represented by some
$\SigZBalpha$ formula $\varphi(\xvec,\Xvec)$.
\end{theorem}
}


\begin{definition}\label{d:relTheories}
The following theories have vocabulary $\LTwoA(\alpha)$
and include the defining axiom (\ref{e:Rowdefine}) for \Row.
$\VZalpha = \VZ + \COMP{\SigZBalpha}$.
The theories $\VZ(m,\alpha)$, $\VTCZalpha$, $\VNCOnealpha$ and $\VLalpha$,
$\VNLprimealpha$
are axiomatized by the axioms of \VZalpha\ together with the axiom:
$\MOD_m$, $\NUMONES$, $\MFV$, $\PATH$, $\CONN$ respectively.
(See Section \ref{s:nonrel-theories}.)
\end{definition}

Note that equality axioms (implicitly) hold for the new symbols $\Row$ and $\alpha$.



The next result connects the theories with their corresponding
complexity classes, except for the theory $\VNLprimealpha$, which
corresponds to the class $\ACZ(\STCONN,\alpha)$ (see part (ii) of
Theorem \ref{t:order-b}) rather than $\csNLalpha$.  We are not able
to provide a theory exactly associated with $\csNLalpha$ because we
cannot show that the associated function class is closed under
composition.

The first step in the proof of the next theorem
is to show that a function is
in $\FACZ(\alpha)$ iff it is $\SigOneB(\alpha)$-definable in \VZalpha.
For the direction $(\Rightarrow)$  the proof of the unrelativized case
uses the \SigZB\ Representation Theorem (Theorem \ref{t:repres}), but
that result does not hold for the relativized case, because,
as remarked above, $\SigZB(\alpha)$ formulas only represent
$\ACZ(\alpha)$ relations
that can be computed by circuits of oracle depth one.

So results in Chapter IX of \cite{Cook:Nguyen} are required.

\begin{theorem}\label{t:def-VNL}
For a class \boldC\ in $\{\ACZ, \ACZm, \TCZ, \NCOne, \L\}$,
a function is in \FCalpha\ if and only if it is
$\SigOneB(\alpha)$-definable in \VCalpha.
\end{theorem}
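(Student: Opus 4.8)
The plan is to follow the template for characterizing provably total functions developed in \cite[Chapters VIII--IX]{Cook:Nguyen}, transcribed to the vocabulary $\LTwoA(\alpha)$ (i.e.\ with $\alpha$ and $\Row$ present), proving the two directions separately and disposing first of the base case $\boldC=\ACZ$, then bootstrapping to the logspace classes.

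For the direction ``$F\in\FCalpha\Rightarrow F$ is $\SigOneB(\alpha)$-definable in $\VCalpha$'', I would start with $\boldC=\ACZ$. By the oracle-circuit characterization of $\ACZ$-reducibility recalled after Definition~\ref{d:reducible} (which is uniform in $\alpha$), a function in $\FACZ(\alpha)$ is $\ACZ$-reducible to $\{\alpha\}$, so there is a sequence $F_1,\dots,F_n$ with each $F_i$ $\SigZB(\alpha)$-definable from $\{F_1,\dots,F_{i-1}\}$ and $F$ likewise from $\{F_1,\dots,F_n\}$. Since $\VZalpha$ contains $\SigZB(\alpha)$-comprehension, it proves each $F_i$ total with a unique value (existence by comprehension applied to the bit graph, uniqueness by the $\SigZB(\alpha)$-induction available in $\VZalpha$); and adjoining a provably total function to a theory with comprehension is a conservative extension that again has comprehension over the enlarged language, so induction on $i$ gives that each $F_i$, hence $F$, is provably total, i.e.\ $\SigOneB(\alpha)$-definable, in $\VZalpha$. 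For $\boldC\in\{\ACZm,\TCZ,\NCOne,\L\}$ one uses that $\boldC(\alpha)=\ACZ(P,\alpha)$ for the relevant complete problem $P$ (Proposition~\ref{p:complete} and Theorem~\ref{t:order-b}(i)): the axiom of $\VCalpha$ asserts existence of a solution $Y$ to a generic instance of $P$, and with $\SigZB(\alpha)$-induction it proves $Y$ unique, so the string function ``solve $P$'' is provably total in $\VCalpha$; the $\ACZ$-reduction argument above, now over the language enlarged by this function, then yields that every function $\ACZ$-reducible to $\{P,\alpha\}$ is provably total. For $\boldC=\L$ one first passes to a totalized version of the path function of \PATH\ (a default output when the input is not an out-degree-$\le 1$ graph), the usual device.

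For the converse, ``$F$ is $\SigOneB(\alpha)$-definable in $\VCalpha$ $\Rightarrow F\in\FCalpha$'', I would build the universal conservative extension $\overline{\VC}(\alpha)$ of $\VCalpha$ obtained by Skolemizing: adjoin a function symbol for every function of $\FCalpha$ with its $\forall\SigZB(\alpha)$ defining axioms. Since $\FCalpha$ is exactly the $\ACZ$-closure of $\{\alpha,P\}$, these functions are closed under composition and witness the comprehension and complete-problem axioms of $\VCalpha$, so $\overline{\VC}(\alpha)$ proves $\VCalpha$. The crux is to show $\overline{\VC}(\alpha)$ is conservative over $\VCalpha$, adapting the Cook--Nguyen argument, the key points being that each defining axiom is $\forall\SigZB(\alpha)$ and that $\FCalpha$ is an $\ACZ$-closure. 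Once conservativity holds, a $\VCalpha$-proof of $\forall\xvec\forall\Xvec\exists Y\,\varphi$ with $\varphi\in\SigZB(\alpha)$ becomes an $\overline{\VC}(\alpha)$-proof, and since $\overline{\VC}(\alpha)$ is universal, Herbrand's theorem extracts a term $t(\xvec,\Xvec)$ over the $\FCalpha$-function symbols with $\overline{\VC}(\alpha)\vdash\varphi(\xvec,\Xvec,t)$; the function denoted by $t$ is a composition of $\FCalpha$-functions, hence in $\FCalpha$, and it agrees with $F$.

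The main obstacle is the one flagged before the theorem: the $\SigZB$ Representation Theorem (Theorem~\ref{t:repres}) fails relativized, since $\SigZB(\alpha)$ formulas capture only $\ACZ(\alpha)$ relations of oracle nested depth one, whereas $\FACZ(\alpha)$ permits constant depth. So I cannot move freely between ``$F\in\FACZ(\alpha)$'' and ``the bit graph of $F$ is $\SigZB(\alpha)$''; everything must be routed through the $\ACZ$-reduction / oracle-circuit characterization, which is precisely where the iterated-comprehension argument (first direction) and the universality of $\overline{\VC}(\alpha)$ (second direction) carry the load. Verifying that the Chapter~VIII--IX proofs — in particular conservativity of $\overline{\VC}(\alpha)$ over $\VCalpha$ — survive this substitution is the delicate part; the rest is routine once that template is in place.
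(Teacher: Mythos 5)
Your proposal is correct and follows essentially the same route as the paper: both directions are the Cook--Nguyen Chapter V/IX template transcribed to the vocabulary $\LTwoA(\alpha)$, with the witnessing (Herbrand/universal conservative extension) argument for the easy direction and the $\ACZ(\alpha)$-closure of the complete problem, together with $\SigZBalpha$-comprehension supplying the parallel (aggregate) applications of $\alpha$, for the hard direction. The only difference is presentational: the paper packages the hard direction as a generalization of Theorem IX.2.3 instantiated with the single aggregate function $F_\alpha$ (whose $\ACZ$-closure is $\FACZ(\alpha)$, and whose aggregate axiom follows from $\COMP{\SigZBalpha}$), whereas you unfold the same content as an induction along the $\ACZ$-reduction sequence using the comprehension-preservation lemma for $\SigZB$-bit-definable functions.
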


\begin{proof}
We start by proving this when \boldC\ is \ACZ:  A function is in
$\FACZ(\alpha)$ iff it is $\SigOneB(\alpha)$-definable in \VZalpha.
The `if' direction follows from a standard witnessing theorem
(see for example Chapter V in \cite{Cook:Nguyen}), because
the existential quantifier in each \COMP{\SigZBalpha} axiom is
witnessed by an $\FACZ(\alpha)$ function whose graph is represented by
a \SigZBalpha\ formula.

The converse follows from a slight generalization of
Theorem IX.2.3 in \cite{Cook:Nguyen}, where the original
states that a function is $\SigOneB$-definable in \VC\ iff it is
in \FC.  That theorem applies to complexity classes \boldC\ consisting
of the relations \ACZ-reducible to a string function $F(X)$ whose
graph is represented by a \SigZB-formula $\delta_F(X,Y)$.  The theory \VC\
has vocabulary \LTwoA\ and
is axiomatized by the axioms of \VZ\ together with
\begin{equation}\label{e:VCax}
    \exists Y \le b \forall i<b \delta_F(X^{[i]},Y^{[i]}).
\end{equation}
The generalization we need (which is proved in the same way) is that
the theory \VC\ is replaced by
a theory $\VC(\alpha)$ with vocabulary $\LTwoA(\alpha)$
axiomatized by the axioms of $\VZ(\alpha)$ and (\ref{e:VCax}), where
now $\delta_F(X,Y)$ is a $\SigZB(\alpha)$-formula.  The assertion now
is that a function is $\SigOneB(\alpha)$-definable in $\VC(\alpha)$ iff
it is in $\FC(\alpha)$.

To apply this to the theory \VZalpha\ we take $F = F_\alpha$ where
$\delta_{F_\alpha}(X,Y)$ is the formula
$|Y| \le |X| \wedge\forall j<|X|(Y(j) \lra \alpha(X^{[j]}))$.
Thus $F_\alpha(X)$ is the bit string resulting from applying $\alpha$
successively to the elements of the sequence of strings coded by $X$,
and so the \ACZ\ closure of $F_\alpha$ is $\FACZ(\alpha)$.
The theory \VZalpha\ has the comprehension axiom $\COMP{\SigZBalpha}$,
which implies (\ref{e:VCax}) when $F$ is taken to be $F_\alpha$.
Thus our generalized Theorem IX.2.3 in \cite{Cook:Nguyen} implies that 
every function
in $\FACZ(\alpha)$ is $\SigOneB(\alpha)$-definable in \VZalpha.

Theorem \ref{t:def-VNL} for the other complexity classes follows
from Proposition \ref{p:complete} and Theorem \ref{t:order-b} (i)
and the fact that the theories for the nonrelativized classes
capture the nonrelativized classes (Section \ref{s:nonrel-theories}).
\end{proof}

The same argument shows that the theory $\VZ(\alpha) + \CONN$ is
associated with the class $\ACZ(\STCONN,\alpha)$. But this class
might not be the same as $\csNLalpha$.
In fact, as pointed out after the proof of Corollary \ref{c:cslComp},
the function class associated with $\csNLalpha$ may not be closed under
composition, and hence $\csNLalpha$ may not
be closed under \ACZ-reductions,
so the framework of \cite[Chapter IX]{Cook:Nguyen} may not apply to this
class.

Notice also that we can relativize the axioms $\MOD_m$,
$\NUMONES$, $\MFV$, and $\PATH$ in the obvious way,
i.e., by replacing the string variables $X$ in \NUMONES\ and $\MOD_m$,
$G$ and $I$ in \MFV, and $E$ in \PATH\ by $\SigZB(\alpha)$ formula(s).
It turns out that these relativized axioms are provable in the respective relativized theories,
and in fact they can be used together with \VZ\ to axiomatize the theories.
More specifically, let
$\NUMONES(\alpha)$ denote the following {\em axiom scheme}:
\begin{multline}\label{e:NUMONESalpha}
\exists Y \le 1 + \tuple{x,x} \ (Y)^0 = 0\ \wedge\\
 \forall z<x\, \left[(\varphi(z) \supset (Y)^{z+1} = (Y)^z + 1) \wedge (\neg
 \varphi(z) \supset (Y)^{z+1} = (Y)^z)\right]
\end{multline}
for all $\SigZB(\alpha)$ formulas $\varphi$ that do not contain $Y$.
Similarly we can define $\MOD_m(\alpha)$, $\MFV(\alpha)$,
and $\PATH(\alpha)$.

The next result is useful in the next subsection.

\begin{proposition}\label{t:VTC-alpha-alt}
\VTCZalpha\ can be equivalently axiomatized by the axioms of \VZ\ and
$\NUMONES(\alpha)$.  Similarly for 
$\VZ(m,\alpha)$, $\VNCOnealpha$, and $\VLalpha$,
with $\NUMONES(\alpha)$ replaced respectively by
$\MOD_m(\alpha)$, $\MFV(\alpha)$, and $\PATH(\alpha)$.
\end{proposition}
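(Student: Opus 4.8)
The plan is to show, for each of the four pairs of theories, that the two axiom systems prove the same sentences; I describe \VTCZalpha\ versus $\VZ+\NUMONES(\alpha)$ and indicate the minor changes for $\VZ(m,\alpha)$, \VNCOnealpha\ and \VLalpha. Recall $\VTCZalpha=\VZalpha+\NUMONES=\VZ+\COMP{\SigZBalpha}+\NUMONES$, so one direction asks us to derive each instance of the scheme $\NUMONES(\alpha)$ in $\VTCZalpha$, and the other to derive $\COMP{\SigZBalpha}$ and $\NUMONES$ in $\VZ+\NUMONES(\alpha)$. The first direction is routine: given a \SigZBalpha\ formula $\varphi(z)$, use $\COMP{\SigZBalpha}$ to get a string $X\le x$ with $X(z)\lra\varphi(z)$ for $z<x$, apply $\NUMONES$ to $X$ to obtain $Y$ with $\delNum(x,X,Y)$, and substitute $\varphi(z)$ for $X(z)$; the cases of $\MOD_m(\alpha)$, $\MFV(\alpha)$ and $\PATH(\alpha)$ are identical, comprehending the relevant string input(s) (using the pairing $\tuple{\cdot,\cdot}$ for the edge relation in the \PATH\ case, where one also checks that the hypothesis $\forall x<a\,\exists!y<a\,\epsilon(x,y)$ transfers to the comprehended $E$) before invoking the basic axiom.

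For the other direction it suffices to derive $\COMP{\SigZBalpha}$ from \VZ\ together with the scheme, since the basic axiom ($\NUMONES$, $\MOD_m$, $\MFV$, $\PATH$) is itself the instance of the scheme in which the distinguished \SigZBalpha\ formula is an atom $X(z)$ (respectively $G(x)$, $I(x)$, $E(x,y)$). So fix a \SigZBalpha\ formula $\psi(z)$ and a bound $y$; we must produce $X\le y$ with $\forall z<y\,(X(z)\lra\psi(z))$. For $\NUMONES(\alpha)$ (and similarly $\MOD_m(\alpha)$) this is immediate: apply the scheme with $\varphi:=\psi$ and $x:=y$ to get $Y$ recording the running counts of $\psi(0),\dots,\psi(y-1)$, and set $X(z)\lra z<y\wedge(Y)^{z+1}\neq(Y)^{z}$, a \SigZB\ formula over \LTwoA, so $X$ exists by the \SigZB\ comprehension already in \VZ, and $X(z)\lra\psi(z)$ holds because $\VZ\vdash a+1\neq a$. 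For $\MFV(\alpha)$, feed $\psi$ in as the leaf labelling of a balanced binary tree with $a\ge y$ leaves (the internal-gate labelling being arbitrary): the resulting evaluation $Y$ satisfies $Y(z+a)\lra\psi(z)$ for $z<y$, so $X(z)\lra Y(z+a)$, again a \SigZB\ formula over \LTwoA, works.

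The $\PATH(\alpha)$ case is the one that requires real work, and I expect it to be the main obstacle. Assuming $y\ge1$ (the case $y=0$ being trivial), on $a=2y+2$ vertices build the out-degree-one directed graph in which the start vertex $0$ has its edge to vertex $1+\chi$ with $\chi=1$ iff $\psi(0)$; the two ``level-$i$'' vertices $2i{-}1$ and $2i$ (for $1\le i<y$) each have their edge to $2(i{+}1){-}1+\chi$ with $\chi=1$ iff $\psi(i)$; and the two level-$y$ vertices $2y{-}1$, $2y$ and the sink $2y{+}1$ all point to the sink. The edge relation $\epsilon(u,w)$ is \SigZBalpha, and \VZ\ proves $\forall u<a\,\exists!w<a\,\epsilon(u,w)$ (the cases partition $\{0,\dots,a{-}1\}$ and a classical case split on each $\psi(i)$ settles each), so $\PATH(\alpha)$ supplies a path $P$ with $(P)^0=0$ and $(P)^{v+1}$ the unique $\epsilon$-successor of $(P)^v$. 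The crucial point is that the invariant ``$(P)^i$ is one of the two level-$i$ vertices'', i.e.\ $\exists b<2\,((P)^i=2i{-}1+b)$, is a \SigZB\ formula over \LTwoA\ --- it mentions $P$ but not $\alpha$ --- and is proved for all $1\le i\le y$ by the \SigZB\ induction available in \VZ, the induction step using only a classical case split on $\psi(i)$ to conclude that the $\epsilon$-successor of a level-$i$ vertex is a level-$(i{+}1)$ vertex. Granting the invariant, $\psi(z)\lra(P)^{z+1}=2(z{+}1)$ for $z<y$, so $X(z)\lra z<y\wedge(P)^{z+1}=2(z{+}1)$, a \SigZB\ formula over \LTwoA, is the required $\COMP{\SigZBalpha}$ instance. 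The subtlety, and the reason this is the hard case, is precisely that the graph must be designed so that correctness of the walk it induces --- that the path never wanders off the intended levels --- is provable using only ordinary \SigZB\ induction and not $\SigZB(\alpha)$ induction (which is not available a priori in $\VZ+\PATH(\alpha)$, and whose derivation is essentially the content of this proposition); this is arranged by making the level-tracking invariant $\alpha$-free, with the $\alpha$-dependence of the branching handled by plain excluded middle.
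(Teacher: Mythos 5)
Your proof is correct and takes essentially the same approach as the paper's: both directions are handled by mutual derivability, with $\COMP{\SigZBalpha}$ used to comprehend the string inputs of the basic axiom in one direction, and the witness of the relativized scheme used to recover $\COMP{\SigZBalpha}$ via ordinary \SigZB\ comprehension in the other. The paper writes out only the $\NUMONES(\alpha)$ case and declares the rest ``similar''; your two-track graph for $\PATH(\alpha)$, with its $\alpha$-free level invariant proved by the \SigZB\ induction available in \VZ, correctly supplies the one ``similar'' case that genuinely requires a construction.
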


\begin{proof}
We prove this for $\NUMONES(\alpha)$.  The other cases are similar.

It is relatively simple to show that the axioms of $\NUMONES(\alpha)$ are provable in $\VTCZalpha$.
Indeed, consider an axiom in $\NUMONES(\alpha)$ as in \eqref{e:NUMONESalpha} above.
By \COMP{\SigZB(\alpha)}, there is a string $X$ such that $X(z) \lra \varphi(z)$ for all $z < x$.
Hence, the string $Y$ that satisfies $\delNum(x,X,Y)$ satisfies \eqref{e:NUMONESalpha}.

For the other direction, suppose that we want to prove the following instance of 
\COMP{\SigZB(\alpha)} using $\NUMONES(\alpha)$ and \VZ:
$$\exists Z \le b \forall z < b,\ Z(z) \lra \varphi(z)$$
where $\varphi$ is a $\SigZB(\alpha)$ formula.
Using $\NUMONES(\alpha)$ we obtain a string $Y$ as in \eqref{e:NUMONESalpha} (for $x = b$).
Now, it is straightforward to identify those $z < b$ such that $\varphi(z)$ holds:
$$\varphi(z) \lra (Y)^{z+1} = (Y)^{z}+1$$
Thus $Z$ can be defined using $\COMP{\SigZB}$ from $Y$.

The arguments for $\VZ(m,\alpha)$, $\VNCOnealpha$, and $\VLalpha$ are similar.
\end{proof}

\ignore{
Notice that natural relativized versions of the additional
axioms of \VC, such as \CONN, are already provable in $\VC(\alpha)$.
For example, let $\CONNalpha$ be the axiom scheme
\begin{multline*}
\forall a\exists Y\ [Y(0,0) \wedge \forall x < a (x \neq 0 \supset \neg Y(0,x))\ \wedge \\
 \forall z < a \forall x<a,\ Y(z+1, x) \lra (Y(z,x) \vee \exists y<a,\
 Y(z,y) \wedge \varphi(y,x))].
\end{multline*}
where $\varphi$ is a \SigZBalpha\ formula.
Then it is easy to use \COMP{\SigZBalpha} to show that
$\VNL(\alpha) \vdash \CONNalpha$.
}

\ignore{
Details can be found in \cite{Aehlig:Cook:Nguyen}.
\begin{proof}
Consider the ($\Longra$) direction for $\boldC = \ACZ$.  Here
$\FACZ(\alpha)$ consists of all $p$-bounded functions which are
\ACZ-reducible to $\{\alpha\}$.
The fact that \VZalpha\ can define all functions in $\FACZ(\alpha)$ can be proved by induction
on $n$ in Definition \ref{d:reducible}.

For each other class, the ($\Longra$) direction can also be proved by induction on $n$ from Definition \ref{d:reducible}
using Theorem \ref{t:relclass-char}.

The ($\Longla$) direction can be proved by a standard witnessing argument,
i.e., by induction on the length of a free-cut-free
\VCalpha-proof whose end-sequent is
the defining axiom of a function provably total in \VCalpha.
\end{proof}
}

\subsubsection{Classes with unbounded oracle nested depth}
Now we present the theories $\VACkalpha$ (for $k \ge 1$) and $\VNCkalpha$ (for $k \ge 2$).
For the nonrelativized case, the axioms for the theories use the fact
that the problem of evaluating an unbounded fanin (resp. bounded fanin) circuit of depth $(\log n)^k$
is \ACZ-complete for \ACk\ (resp. \NCk).
Unfortunately these nonrelativized problems are not
$\ACZ(\alpha)$-complete for the corresponding relativized problems, 
unlike the situation for the classes with bounded oracle nested depth
considered previously.  However for the oracle versions
of the circuit classes, the evaluation problems become \ACZ-complete
for the corresponding relativized classes, provided (in the case of \NCk) 
the circuit descriptions tell the nested oracle depth of each
oracle gate.
Thus $\VACkalpha$ (or \VNCkalpha) will be axiomatized by \VZ\ together with
an additional axiom that formalizes an oracle computation that solves
the respective complete problem.

First we describe the encoding of the input.
As before, a circuit of width $w$ and depth $d$ will be encoded by $(w,d,E,G)$,
and its input will be denoted by $I$.
Since the order of inputs to an oracle gate is important,
the string variable $E$ that encodes the wires in the circuit is now four-dimensional:
$E(z,u,t,x)$ indicates that gate $(z,u)$ (i.e. the $u$-th gate on layer $z$)
is the $t$-th input to gate $(z+1,x)$.
Also, the type of a gate $(z,x)$ is specified by $(G)^{\tuple{z,x}}$ as before, but now it
can have value in $\wedge$, $\vee$, $\neg$, or $\alpha$
(we no longer consider just monotone circuits).
We use the following formula to ensure that this is a valid encoding;
it says that each gate $(z+1,x)$ has an arity $s$ which is 1 if the gate is a $\neg$-gate.
Moreover for $t < s$ the $t$-th input to a gate is unique.
\begin{multline*}
\Proper(w,d,E,G) \equiv 
\forall z < d \forall x < w \exists! s \le w (s\ge 1 \wedge \Arity(z+1,x,s,E)) \wedge\\
 (G)^{\tuple{z+1,x}} = \mbox{``$\neg$''} \supset \Arity(z+1,x,1,E)
\end{multline*}
where
$\Arity(z+1,x,s,E)$ (which asserts that gate $(z+1,x)$ has arity $s$) 
is the formula:
\begin{equation}\label{e:Arity}
\forall t < s \exists! u < w\ E(z,u,t,x) \wedge\\
\forall t < w (s \le t \supset \neg\exists u < w E(z,u,t,x))
\end{equation}

The formula $\delLocv^{\alpha}(w,d,E,G,I,Q,Y)$ defined below states that
$(Q,Y)$ is an evaluation of the oracle circuit $(w,d,E,G)$ on input $I$.
Here the string $Q^{[z+1,x]}$ encodes the query to the oracle gate
$(z+1,x)$ and bit $Y(z,x)$ is the value of gate $(z,x)$.
(LOCV stand for ``layered oracle circuit value.'')

\begin{definition}
The formula $\delLocv^{\alpha}(w,d,E,G,I,Q,Y)$ is the formula
\begin{multline*}\label{e:Lmcvalpha}
\forall z < d \forall x < w,\ \ \
[Y(0,x) \lra I(x)] \wedge  \\
\big[\forall t < w(Q^{[z+1,x]}(t) \lra (\exists u < w,\, E(z,u,t,x) \wedge Y(z,u)))\big]\ \wedge  \\
\big[Y(z+1, x) \lra
        \big(((G)^{\tuple{z+1,x}} =\mbox{``$\wedge$''} \wedge \forall t,u < w,\, E(z,u,t,x) \supset Y(z,u)) \vee\\
        ((G)^{\tuple{z+1,x}} = \mbox{``$\vee$''} \wedge \exists t < w \exists u < w,\, E(z,u,t,x) \wedge Y(z,u)) \vee\\
        ((G)^{\tuple{z+1,x}} = \mbox{``$\neg$''} \wedge \exists u < w,\, E(z, u, 0, x) \wedge \neg Y(z,u)) \vee\\
	((G)^{\tuple{z+1,x}} = \mbox{``$\alpha$''} \wedge \alpha(Q^{[z+1,x]}))\big)\big]
\end{multline*}
\end{definition}

\begin{definition}[\VACkalpha]
For $k \ge 1$, \VACkalpha\ is the theory over the vocabulary
$\LTwoA(\alpha)$
and is axiomatized by the axioms of \VZ\ and the following axiom:
\begin{equation}\label{e:VACax}
(\Proper(w,d,E,G) \supset \exists Q \exists Y\, \delLmcv^{\alpha}(w,|w|^k,E,G,I,Q,Y))
\end{equation}
\end{definition}

The axiom (\ref{e:VACax}) asserts that \VACk\ circuits can be evaluated.
Since a function in $\FACk(\alpha)$ is computed by an \ACZ-uniform
family of $\ACk(\alpha)$ circuits, and our method of describing
an oracle circuit by the tuple $(w,d,E,G)$ can be taken as a definition,
the axiom is clearly strong enough to show that the functions in
$\FACk(\alpha)$ are $\SigOneB(\alpha)$-definable in $\VACk(\alpha)$.
But in order to show the converse we need
to show that the existential quantifiers $\exists Q \exists Y$ can
be witnessed by functions in $\FACk(\alpha)$, and for this
we need to show the
existence of universal circuits for $\ACk(\alpha)$.  This is done
in the next result.

\begin{proposition}\label{t:ACk-alpha}
For $k \ge 1$, the problem of evaluating the circuit encoded by
$(w,|w|^k, E, G)$ on a given input $I$, assuming
$\Proper(w,|w|^k, E, G)$ is satisfied,
is complete for $\FACk(\alpha)$ under \ACZ-many-one reductions.
\end{proposition}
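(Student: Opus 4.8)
The plan is to prove the two halves of the completeness claim separately: first that the circuit-evaluation function lies in $\FACk(\alpha)$, and then that every function in $\FACk(\alpha)$ is $\ACZ$-many-one reducible to it.

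For membership I would evaluate the proper layered oracle circuit $(w,|w|^k,E,G)$ on its input $I$ layer by layer. Writing $d=|w|^k$, the point is that a single layer transition --- passing from the values $Y(z,\cdot)$ to the query strings $Q^{[z+1,\cdot]}$ and then to $Y(z+1,\cdot)$ --- is an $\ACZalpha$ computation with at most one level of $\alpha$-gates: each bit $Q^{[z+1,x]}(t)$ is an unbounded $\vee$ over $u<w$ of $E(z,u,t,x)\wedge Y(z,u)$, and then $Y(z+1,x)$ is obtained by a single unbounded $\wedge$, a single unbounded $\vee$, a single negation, or one $\alpha$-call on $Q^{[z+1,x]}$, selected according to the bit-extractable value $(G)^{\tuple{z+1,x}}$. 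This is essentially the $\SigZBalpha$ matrix of $\delLocv^{\alpha}$ unwound one step. Stacking the $d=|w|^k=\bigO((\log n)^k)$ transitions (here $n$ is the length of the input tuple, so $w\le n$) yields a uniform $\ACkalpha$ circuit that outputs the polynomially bounded pair $(Q,Y)$; uniformity is routine. Hence the evaluation function is in $\FACk(\alpha)$.

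For hardness, let $F\in\FACk(\alpha)$ be computed by a DLOGTIME-uniform family $\{D_n\}$ of $\ACkalpha$ circuits of depth at most $c\,(\log n)^k$. I would exhibit an $\FACZ$ map $(\xvec,\Xvec)\mapsto(w,|w|^k,E,G,I)$ encoding a proper layered oracle circuit whose designated output bits are $F(\xvec,\Xvec)$; this is the relativized analogue of the universal-circuit construction in \cite[Section IX.5.6]{Cook:Nguyen}, the only new ingredient being that each oracle gate of $D_n$ reappears as an $\alpha$-gate whose input ordering is recorded in the four-dimensional relation $E$. Concretely: (i) level $D_n$ and make it layered by replacing every layer-skipping wire with a chain of single-input pass-through gates (say $\vee$-gates of arity $1$); (ii) choose the width $w=w(n)$ to be a fixed polynomial in $n$ large enough that every gate fits and $c\,(\log n)^k\le|w|^k$, then pad the number of layers up to exactly $|w|^k$ with further pass-through gates above the output; (iii) fill all remaining positions $(z+1,x)$ with dummy arity-$1$ $\vee$-gates so that $\Proper(w,|w|^k,E,G)$ holds. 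DLOGTIME-uniformity of $\{D_n\}$ makes $n\mapsto(w,|w|^k,E,G)$ and $(\xvec,\Xvec)\mapsto I$ computable in $\FACZ$, and the evaluation $Y$, read off at the designated top-layer positions by an $\FACZ$ projection, equals $F(\xvec,\Xvec)$ --- the required $\ACZ$-many-one reduction.

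The step requiring genuine care is the layering-and-padding (i)--(iii): one must check it is carried out $\FACZ$-uniformly (in particular that a suitable leveling of $D_n$ is available from its uniform description), that the resulting circuit has depth \emph{exactly} $|w|^k$ for a polynomially bounded width $w$ (this is where $k\ge1$ enters, since $c\,(\log n)^k$ is dominated by $|w|^k$ once $w$ is a sufficiently large fixed polynomial in $n$), and that its encoding satisfies $\Proper$. Relativization adds nothing substantial: the inserted pass-through gates are Boolean, so they do not disturb the oracle queries; $\ACkalpha$ for $k\ge1$ places no bound on the nested depth of oracle gates, so the depth-$|w|^k$ tower of $\alpha$-calls is harmless; and the $\alpha$-case is already built into $\delLocv^{\alpha}$ --- which is precisely why, as noted before the proposition, the bounded-fanin variant needs the circuit description to carry the nested oracle depth of each gate.
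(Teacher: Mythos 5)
Your proposal is correct and follows essentially the same route as the paper: membership is shown by exactly the paper's layer-by-layer universal-evaluation construction (one level of $\alpha$-gates per layer, depth $\bigO(|w|^k)$, polynomial size), and hardness by an $\FACZ$-computable encoding of a uniform $\ACkalpha$ family as a proper layered circuit. The only difference is emphasis: the paper treats the hardness direction as essentially definitional (the tuple $(w,d,E,G)$ ``can be taken as a definition'' of the class), whereas you spell out the leveling, padding, and dummy-gate details, which is a harmless elaboration.
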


\begin{proof}
The hardness direction follows by the discussion above:
Every function $F(X)$
in $\FACk(\alpha)$ can be computed by a circuit family in which the
parameters $w,E,G,I$ for each circuit can be computed by \ACZ\
functions of $X$.

Conversely we need to prove membership of the circuit evaluation problem
in  $\FACk(\alpha)$.  We do this for the case $k=1$.  The proof for the
general case is similar.
Thus we need to construct a universal circuit
for oracle circuits of depth $\log n$.
In fact, we will construct a universal circuit
(of depth $\bigO(d)$, size polynomial in $w,d$) for all circuits of depth $d$ and width $w$.
Let $C = (w,d,E,G)$ denote the given circuit.
The idea is to construct a component $K_{z,x}$ for each gate $(z,x)$
in $C$, where $z < d$ and $x < w$:
$K_{z+1,x}$ is an $\ACZ(\alpha)$ circuit that takes inputs from $E$, $G$, $I$ and $K_{z,u}$ for all $u<w$,
so that when each $K_{z,u}$ computes gate $(z,u)$ in $C$,
$K_{z+1,x}$ computes the value of gate $(z+1,x)$.
We will present $K_{z,x}$ as a bounded depth formula.

The circuits $K_{0,x}$ are easy to define:
for all $x < w$, $K_{0,x} \equiv I(x)$.
For $z \ge 0$, $K_{z+1,x}$ is the following disjunction:
\begin{multline*}
(G)^{\tuple{z+1,x}} = \mbox{``$\wedge$''}
	\wedge \bigwedge_{t < w} \bigwedge_{u < w} E(z,u,t,x) \supset K_{z,u} \vee \\
(G)^{\tuple{z+1,x}} = \mbox{``$\vee$''}
	\wedge \bigvee_{t < w} \bigvee_{u < w} E(z,u,t,x) \wedge K_{z,u} \vee \\
(G)^{\tuple{z+1,x}} = \mbox{``$\neg$''} \wedge \bigvee_{u < w} E(z,u,0,x) \wedge \neg K_{z,u} \vee \\
\big[(G)^{\tuple{z+1,x}} = \mbox{``$\alpha$''} \wedge \bigvee_{s < w}\\
	\big(\Arity(z,x,s,E) \wedge
	\alpha(\bigvee_{u<w} (E(z,u,0,x) \wedge K_{z,u}), \ldots,
	\bigvee_{u<w} (E(z,u,s-1,x) \wedge K_{z,u}))\big)\big]
\end{multline*}
Now by arranging $K_{z,x}$ in the same order as $(z,x)$ we obtain an
$\ACOne(\alpha)$ circuit that evaluates $C$.
\end{proof}

\begin{theorem}\label{t:VACk-def}
For $k \ge 1$, the functions in $\FACk(\alpha)$ are precisely the
$\SigOneB(\alpha)$-definable functions of $\VACk(\alpha)$.
\end{theorem}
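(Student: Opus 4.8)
The plan is to follow the same two-step pattern already used for Theorem~\ref{t:def-VNL}, but now feeding in the universal circuit of Proposition~\ref{t:ACk-alpha} in place of the completeness results for the bounded-depth classes. A preliminary observation does most of the bookkeeping: although \VACkalpha\ is defined over the base theory \VZ\ rather than \VZalpha, for $k\ge 1$ it in fact proves $\COMP{\SigZBalpha}$. Indeed, given a $\SigZBalpha$ formula $\varphi(z)$, the map $z\mapsto\varphi(z)$ (for $z<b$) is computed by a constant-depth oracle circuit of oracle nesting depth one — a single layer of $\alpha$-gates applied to the relevant \Row-terms followed by an \ACZ\ combination — whose description $(w,|w|^k,E,G)$ and input $I$, and whose properness, are \SigZB-definable (after padding the circuit to depth $|w|^k$); instantiating axiom~\eqref{e:VACax} with this circuit and reading off the output layer yields the set required by $\COMP{\SigZBalpha}$. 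Hence \VACkalpha\ extends \VZalpha, has $\SigZBalpha$-COMP and the induced $\SigZBalpha$-IND available, and — just as in Proposition~\ref{t:VTC-alpha-alt} — is provably equivalent to \VZalpha\ together with the row-wise form $\exists Y\le b\,\forall i<b\,\delta(X^{[i]},Y^{[i]})$ of \eqref{e:VACax}. This last point is what makes the template of the generalized witnessing theorem applicable.

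For the direction that every $F\in\FACk(\alpha)$ is $\SigOneB(\alpha)$-definable in \VACkalpha: by definition $F$ is computed by an \ACZ-uniform family of \ACkalpha\ circuits, and under our encoding conventions the circuit description $(w,|w|^k,E,G)$, the circuit input $I$, and the fact $\Proper(w,|w|^k,E,G)$ are all \SigZB-definable (indeed \FACZ) from the inputs $\xvec,\Xvec$, so \VZ\ proves them. Axiom~\eqref{e:VACax} then supplies $Q,Y$ with $\delLocv^{\alpha}(w,|w|^k,E,G,I,Q,Y)$, and the bits of $F(\xvec,\Xvec)$ are the bits of $Y$ on the output layer; putting the bounded quantifiers $\exists Q\exists Y$ in front of a $\SigZBalpha$ matrix gives a $\SigOneB(\alpha)$ formula representing the graph of $F$. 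To get $\SigOneB(\alpha)$-definability in the sense of Definition~\ref{d:definable} we must also prove $\exists!$; this is done inside \VACkalpha\ by $\SigZBalpha$-induction on the layer index $z\le|w|^k$, showing simultaneously that $Y$ restricted to layers $\le z$ and $Q$ restricted to oracle gates on those layers are uniquely determined by the inputs — the step to $z+1$ reads off $Q^{[z+1,x]}$ from the already-fixed layer-$z$ values and then evaluates gate $(z+1,x)$ according to its type, using $\alpha(Q^{[z+1,x]})$ for $\alpha$-gates. Uniqueness of the extracted output follows, completing this direction.

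For the converse I would invoke the same generalization of Theorem~IX.2.3 in \cite{Cook:Nguyen} used in the proof of Theorem~\ref{t:def-VNL}: if \boldC\ is the class of relations \ACZ-reducible to a single string function $F$ whose graph is given by a $\SigZBalpha$ formula $\delta_F$, then the $\SigOneB(\alpha)$-definable functions of the theory axiomatized by \VZalpha\ together with $\exists Y\le b\,\forall i<b\,\delta_F(X^{[i]},Y^{[i]})$ are precisely the functions in $\FC(\alpha)$. Take $F$ to be the universal evaluation function $F_{\mathrm{eval}}$ of Proposition~\ref{t:ACk-alpha}: on a code $(w,|w|^k,E,G,I)$ satisfying $\Proper$ it outputs the pair $(Q,Y)$, so $\delta_{F_{\mathrm{eval}}}$ is essentially $\Proper\wedge\delLocv^{\alpha}$, which is $\SigZBalpha$. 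Proposition~\ref{t:ACk-alpha} together with the fact (stated in Section~\ref{s:theory}) that \ACkalpha\ for $k\ge 1$ is closed under \ACZ-reductions shows that $\FACk(\alpha)$ is exactly the \ACZ-closure of $F_{\mathrm{eval}}$; and by the preliminary observation \VACkalpha\ is exactly the corresponding theory $\VZalpha+\eqref{e:VACax}$ in row-wise form. The generalized Theorem~IX.2.3 then yields the conclusion.

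The step that needs real care — the main obstacle — is the interplay between the base theory being only \VZ\ and the need for $\SigZBalpha$-comprehension and $\SigZBalpha$-induction on both sides: one must first extract $\COMP{\SigZBalpha}$ (hence \VZalpha) from the evaluation axiom, and then verify that \VACkalpha\ genuinely meets the hypotheses of the generalized Theorem~IX.2.3, namely that it is axiomatized by \VZalpha\ plus a single row-wise axiom of the form~\eqref{e:VCax} for a string function whose graph is a $\SigZBalpha$ formula. Once this template check is in place, the witnessing content itself is inherited verbatim from \cite{Cook:Nguyen}, since that argument never uses that $\delta_F$ is $\alpha$-free; the only substantive new ingredient is the universal circuit of Proposition~\ref{t:ACk-alpha}, which is responsible for $\FACk(\alpha)$ being closed under \ACZ-reductions and hence equal to the \ACZ-closure of a single function with a $\SigZBalpha$ graph.
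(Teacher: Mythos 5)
Your proposal is correct and takes essentially the same route as the paper: both directions ultimately rest on Proposition~\ref{t:ACk-alpha} (membership via the universal circuit for the witnessing direction, hardness for definability) combined with the Chapter~IX witnessing machinery of \cite{Cook:Nguyen}, exactly as in the paper's two-sentence proof and its use of the generalized Theorem~IX.2.3 in the proof of Theorem~\ref{t:def-VNL}. Your extra packaging --- deriving $\COMP{\SigZBalpha}$ inside \VACkalpha, recasting the axiom in aggregate form to fit the template, and proving uniqueness by layer induction --- is sound and just makes explicit routine steps the paper delegates to ``the methods used in Chapter IX.''
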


\begin{proof}
By Proposition \ref{t:ACk-alpha} the problem of witnessing the
quantifiers $\exists Q\exists Y$ in the axiom (\ref{e:VACax}) is
in $\FACk(\alpha)$, and so by a standard witnessing argument every
$\SigOneB(\alpha)$-definable function is in $\FACk(\alpha)$.
The converse follows from the hardness direction of Proposition
\ref{t:ACk-alpha} and the fact that the $\SigOneB(\alpha)$-definable
functions in $\VACk(\alpha)$ are closed under $\ACZ$-reductions, by
the methods used in Chapter IX of \cite{Cook:Nguyen}.
\end{proof}

Finally we consider $\NCkalpha$ classes for $k\ge 2$.
To specify an $\NCkalpha$ circuit, we need to express the condition that
$\wedge$ and $\vee$ gates have fanin 2.
We use the following formula $\Fanin'(w,d,E,G)$ to express this,
see also \eqref{e:Arity}:
\begin{multline*}
\forall z < d \forall x < w,\ \
\big((G)^{\tuple{z,x}} \neq \mbox{``$\alpha$''} \wedge (G)^{\tuple{z,x}} \neq \mbox{``$\neg$''} \big)
\supset Arity(z,x,2,E)
\end{multline*}

Moreover the nested depth of oracle gates in circuit $(w,d,E,G)$ needs
to be bounded separately from the circuit depth $d$.
We use a formula $\OHeight_k(w,d,E,G,D)$ which states that this nested depth is bounded by $|w|^k$.
Here the extra string variable $D$ is to compute the nested depth of oracle gate:
$D$ is viewed as a sequence, where $(D)^{\tuple{z,x}}$ is the oracle depth of gate $(z,x)$.
(Recall that the gates $(0,x)$ are input gates.)
The sequence is computed inductively, starting with the input gates.
An explicit formulation is rather straightforward but tedious,
so we omit the details here.
Note that we can use \ACZ\ number functions
such as $|x|$ and $\max$ (which returns the maximum element in a
bounded sequence),
because they can be eliminated, see \cite[Lemma V.6.7]{Cook:Nguyen}.
\ignore{
\begin{gather*}
\OHeight_k(w,d,E,G,D) \ \equiv \
\forall z \le d \forall x<w\ \\
(D)^{\tuple{0,x}} = 0 \wedge (D)^{\tuple{d,x}} \le (\log w)^k \wedge \\
(G)^{\tuple{z+1,x}} = \mbox{``$\alpha$''} \supset 
(D)^{\tuple{z+1,x}} = 1  + \max\{(D)^{\tuple{z,u}}: u < w \wedge \exists t < w\, E(z,u,t,x)\}\ \wedge \\
(G)^{\tuple{z+1,x}} = \mbox{``$\neg$''} \supset 
(D)^{\tuple{z+1,x}} = (D)^{\tuple{z,u}}: u < w \wedge \exists t < w\, E(z,u,t,x)\}
(G)^{\tuple{z+1,x}} \neq \mbox{``$\alpha$''} \supset 
(D)^{\tuple{z+1,x}} = \max\{(D)^{\tuple{z,u}}: u < w \wedge \exists t < w\, E(z,u,t,x)\}
\end{gather*}
}

\begin{definition}[\VNCkalpha]
For $k\ge 2$, $\VNCkalpha$ is the theory over $\LTwoA(\alpha)$ and is axiomatized by \VZ\
and the axiom
\begin{multline}\label{e:axVNCK}
[\Proper(w,d,E) \wedge \Fanin'(w,|w|^k, E, G) \wedge  \\
\OHeight_{k-1}(w,d, E, G, D)]\ \supset\
 \exists Q \exists Y\ \delLocv^{\alpha}(w,|w|^k,E,G,I,Q,Y)
\end{multline}
\end{definition}

\begin{proposition}\label{t:NCk-alpha}
For $k \ge 2$,  the problem of witnessing the quantifiers
$\exists Q\exists Y$ in the axiom for \VNCkalpha\
is complete for $\NCkalpha$ under \ACZ-many-one reductions.
\end{proposition}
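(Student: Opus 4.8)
The plan is to follow the proof of Proposition~\ref{t:ACk-alpha} closely, establishing the two directions of completeness separately: that the witnessing problem lies in $\FNCk(\alpha)$ (membership), and that every function of $\FNCk(\alpha)$ is \ACZ-many-one reducible to it (hardness).

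For hardness I would argue as in the \ACk\ case. A function $F\in\FNCk(\alpha)$ is computed by an \ACZ-uniform family of $\NCkalpha$ circuits, which by Definition~\ref{d:NCkalpha} have arity-$2$ Boolean gates and nested oracle depth $\bigO((\log n)^{k-1})$. I would layerise each circuit (inserting double-negation gates along short paths) and pad depth and width so that the $n$-th circuit becomes a layered circuit of width $w$ polynomial in $n$ and depth exactly $|w|^k$, still obeying the arity-$2$ and oracle-depth restrictions. \ACZ-uniformity is preserved, so from the string input $X$ (and the unary number inputs) one computes by \ACZ\ functions a tuple $(w,|w|^k,E,G,I)$ and a depth-witness $D$, and this tuple satisfies $\Proper \wedge \Fanin' \wedge \OHeight_{k-1}$. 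Any $(Q,Y)$ with $\delLocv^{\alpha}(w,|w|^k,E,G,I,Q,Y)$ has $Y$ equal to the gate-value array, so an \ACZ\ function reads the output of $F$ off $Y$; this is the reduction.

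For membership I would, given the size parameters $w$ and $d=|w|^k$, construct a single circuit that on input $(E,G,I,D)$ satisfying the hypotheses outputs a correct $(Q,Y)$, and show it can be taken to be an $\NCkalpha$ circuit. I would reuse the universal-circuit idea of Proposition~\ref{t:ACk-alpha}: build gadgets $K_{z,x}$ ($z\le d$, $x<w$) with $K_{0,x}\equiv I(x)$ and $K_{z+1,x}$ computing the value of gate $(z+1,x)$ of $C=(w,d,E,G)$ from $E$, $G$ and the shared outputs of $K_{z,0},\ldots,K_{z,w-1}$, then set $Y(z,x):=K_{z,x}$ and read the assembled query bits off as $Q$. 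The point that is new relative to Proposition~\ref{t:ACk-alpha} is that each gadget must itself be fanin-$2$: the unbounded-fanin conjunctions and disjunctions over $u<w$ used in the \ACk\ gadget---both when selecting the value of a predecessor gate and when assembling the query string $Q^{[z+1,x]}$ feeding an oracle gate---must be replaced by balanced binary trees, the single $\alpha$-gate of an oracle gadget being retained unchanged. Since the new gates are all Boolean and lie strictly between consecutive $K$-layers, the nested oracle depth of the universal circuit equals that of $C$, namely $\bigO((\log n)^{k-1})$, as Definition~\ref{d:NCkalpha} requires, and sharing the $K$'s keeps the size polynomial in $w$ and $d$.

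The step I expect to be the main obstacle is the depth accounting for this universal circuit: each $K$-layer contributes, through the binary-tree ``gather'' steps, depth $\bigO(\log w)$, so one has to organise the construction so that the total depth is still $\bigO((\log n)^k)$ rather than a factor $\log n$ more---this is where the hypothesis $k\ge 2$ enters, and where one must exploit the separate bound $\bigO((\log n)^{k-1})$ on the nesting of $\alpha$-gates. This is the same bookkeeping performed for the non-relativised classes in \cite[Section~IX.5.6]{Cook:Nguyen} using the alternating-Turing-machine characterisations of \ACk\ and \NCk, and I would adapt that treatment while tracking the oracle gates through the construction. Granting this, the witnessing function is in $\FNCk(\alpha)$; together with the hardness direction and the closure of $\NCkalpha$ under \ACZ-reductions this yields \ACZ-many-one completeness, just as Proposition~\ref{t:ACk-alpha} does for $\FACk(\alpha)$.
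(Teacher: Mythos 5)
Your hardness direction is essentially the paper's. The problem is the membership direction: the step you flag as ``the main obstacle'' is the whole content of the proof, and ``granting this'' does not supply it. In your universal layered circuit, every layer must, for each gate, select the values of its data-dependent predecessors from among the $w$ shared wires of the previous $K$-layer (and assemble oracle queries likewise); with fanin-$2$ Boolean gates this multiplexing costs depth $\Theta(\log w)$ per layer, so the total Boolean depth is $\Theta(|w|^k\cdot\log w)=\Theta((\log n)^{k+1})$, one logarithmic factor too much. This is not repaired by bookkeeping inside the same construction: the overrun comes from the Boolean gates, so the separate $\bigO((\log n)^{k-1})$ bound on oracle nesting does not help, and $k\ge 2$ plays no role in rescuing it (it only reflects that \NCOnealpha\ is axiomatized separately via $\MFV$). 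Pointing to the non-relativized treatment in Cook--Nguyen, Section IX.5.6, does not discharge the step either, since the new work in this proposition is precisely how to carry the $\alpha$-gates through that argument.

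The paper's membership argument uses a different evaluation strategy rather than a per-layer universal circuit. For $k=2$ (the general case is similar) the $|w|^2$ layers are grouped into $|w|$ blocks of $|w|$ layers each; a shallow uniform unbounded-fanin reduction (polynomial size, depth $\log\log n$) recasts the instance so that each block is a multi-output \NCOnealpha\ circuit with \emph{all} $\alpha$-gates on its first layer and, crucially, presented in the \emph{extended connection language}. Lemma \ref{t:NCOne-alpha} --- the relativized sentence-value problem in this presentation is \ACZ-complete for \NCOnealpha\ --- then lets each block (each single-output piece $B_{i,j}$) be evaluated by an \NCOnealpha\ circuit of depth $\bigO(\log w)$, and stacking the block evaluators gives an \NCkalpha\ circuit of depth $\bigO((\log n)^k)$ with the required oracle nesting. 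So the missing idea in your proposal is exactly this block decomposition together with the ECL-based evaluation of an entire depth-$|w|$ fanin-$2$ block in depth $\bigO(\log w)$, with the $\alpha$-gates confined to one layer per block so that the relativized formula-value lemma applies; without it, your construction does not stay inside $\NCkalpha$.
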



First we exhibit a problem complete under \ACZ-reductions for \NCOnealpha.
Informally, this is the problem of evaluating a relativized sentence which is given using
the {\em extended connection language} \cite{Ruzzo:81:jcss}.
More precisely, we consider encoding relativized sentences by tuples $(a,G,I,J)$ in the following way.
The sentence is viewed as a balanced binary tree as in the axiom $\MFV$ \eqref{e:MFV},
but now each leaf $Y(x+a)$ can be an input bit (from $I$) or (the negation of) an $\alpha$-gate
that takes its input from $J$.
In other words, the underlying circuit for the sentence has exactly one layer of oracle gates
which take input directly from the input constants.
More precisely, let 
\begin{multline*}
\delta(a,G,I,J,Y) \equiv \forall x < a,\
	G(x+a) = \text{``$\alpha$''} \supset (Y(x+a) \lra \alpha(J^{[x]})) \ \wedge \\
	G(x+a) = \text{``$\neg\alpha$''} \supset (Y(x+a) \lra \neg\alpha(J^{[x]})) \ \wedge\\
	G(x+a) = \text{``const''} \supset (Y(x+a) \lra I(x)) \ \wedge\\
        0 < x\ \supset\ Y(x) \lra \left(\big(G(x) \wedge Y(2x) \wedge Y(2x+1)\big) \vee \right.\\
        \left. \big(\neg G(x) \wedge (Y(2x) \vee Y(2x+1))\big)\right)
\end{multline*}
In the next result we emphasize that the \ACZ-reductions referred to
are the `Turing' reductions given in Definition \ref{d:reducible}.
\begin{lemma}\label{t:NCOne-alpha}
The relation given by the formula $\exists Y(\delta(a,G,I,J,Y) \wedge Y(1))$ is
\ACZ-complete for \NCOnealpha.
\end{lemma}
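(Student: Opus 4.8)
The plan is to establish the two inclusions separately: (i) the relation $\exists Y(\delta(a,G,I,J,Y)\wedge Y(1))$ is in $\NCOnealpha$, and (ii) every problem in $\NCOnealpha$ is \ACZ-reducible to it. For (i), I would argue directly from Definition \ref{d:NCkalpha}: the tuple $(a,G,I,J)$ describes a balanced binary tree of depth $|a|$ whose leaves are either input constants $I(x)$ or literals $\pm\alpha(J^{[x]})$. A DLOGTIME-uniform circuit can, from $(a,G,I,J)$, first apply one layer of $\alpha$-gates to compute all the leaf values $Y(x+a)$ — here the $x$-th oracle gate takes its inputs from the bits of $J^{[x]}=\Row(x,J)$, which is an \ACZ function of the input — and then evaluate the balanced binary $\wedge/\vee$ tree bottom-up using fanin-2 Boolean gates in depth $|a|=\bigO(\log n)$. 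The nested oracle depth is exactly one, hence $\bigO((\log n)^0)$, so this is a genuine $\NCOnealpha$ circuit and the relation lies in $\NCOnealpha$.

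For the harder direction (ii), I would show that an arbitrary $\NCOnealpha$ circuit can be simulated by an \ACZ-Turing reduction to this problem, following Ruzzo's extended connection language technique \cite{Ruzzo:81:jcss} adapted to the relativized setting. Let $\{C_n\}$ be an $\NCOnealpha$ circuit family: polynomial size, depth $\bigO(\log n)$, fanin-2 Boolean gates, and — crucially — nested oracle depth bounded by a constant $c$. I would process the circuit in $c+1$ \emph{oracle layers}. Between consecutive oracle gates (and between the inputs and the first oracle layer, and between the last oracle layer and the output) sit Boolean subcircuits of fanin 2 and logarithmic depth; each such subcircuit, together with the layer of oracle gates feeding into it, is exactly an instance of the $\delta$-problem once we supply the appropriate $(a,G,I,J)$. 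Using the extended connection language of $C_n$ — which is \ACZ-computable by DLOGTIME uniformity — an \ACZ circuit can, for each gate $g$ just below an oracle layer, write down the description $(a_g, G_g, I_g, J_g)$ of the balanced binary formula computing $g$'s value from the previously-computed oracle-gate outputs and the input bits. (Padding sub-formulas with dummy $\wedge$-gates makes them balanced; this is routine.) We then make one oracle call to our complete problem per such gate to obtain that gate's value. Iterating this $c+1=\bigO(1)$ times — each round an \ACZ stage with a single layer of oracle calls to the $\delta$-problem — gives an \ACZ-reduction (in the Turing sense of Definition \ref{d:reducible}) computing $C_n$. Since $c$ is a constant, the whole reduction is constant-depth.

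The main obstacle will be the bookkeeping in the reduction: verifying that from the DLOGTIME-uniformity (equivalently, \FO-definability) of $\{C_n\}$ one can actually extract, by an \ACZ function, the balanced-binary-tree encoding $(a_g,G_g,I_g,J_g)$ of each inter-layer subcircuit — in particular identifying for each leaf whether it is a raw input bit, a (negated) oracle-gate output from the previous round, and which string $J^{[x]}$ that oracle gate queried. This requires using Ruzzo's result that the connection language of a uniform $\NCOne$ family is itself easy to compute, plus the observation that the bounded nested oracle depth lets us stratify the circuit into $\bigO(1)$ such rounds. Once that encoding is in hand, the reduction is immediate; and combined with direction (i) we conclude $\NCOnealpha = \ACZ(\text{$\delta$-problem})$, i.e.\ the relation is \ACZ-complete for $\NCOnealpha$. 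I would remark that this is the relativized analogue of the characterization $\NCOne = \ACZ(\FormVal)$ underlying Proposition \ref{p:complete}, with $\FormVal$ replaced by its one-oracle-layer extension.
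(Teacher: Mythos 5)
Your proposal is correct and follows essentially the same route as the paper's proof: membership by one layer of $\alpha$-gates followed by $\NCOne$ evaluation of the balanced monotone sentence, and hardness by stratifying an $\NCOnealpha$ circuit into constantly many one-oracle-layer blocks described via the extended connection language and evaluated by successive queries to the relation. The extra bookkeeping you flag (extracting the balanced encodings from uniformity and chaining $I$, $J$ through the rounds) is exactly what the paper's terser argument takes for granted.
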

\begin{proof}
For the hardness direction we note that the circuits solving a problem
in \NCOnealpha\ have oracle nested depth bounded by some constant $d$.
Hence such a circuit can be simulated by $d$ circuits of the form
described above, forming $d$ layers.  The layers can be evaluated by
$d$ successive queries to
the relation in the lemma, where in each layer except the first, the
constant inputs $I$ and the oracle inputs $J$ are determined by the gate
values in the previous layer.

For membership in \NCOnealpha, observe that we can evaluate the first layer of the circuit
by an \ACZalpha\ circuit (see also the proof of Proposition \ref{t:ACk-alpha}).
Once this has been done, the remaining task is to evaluate a nonrelativized,
balanced, monotone boolean sentence, which can be done by an \NCOne\ circuit.
\end{proof}

\begin{proof}[Proof outline of Proposition \ref{t:NCk-alpha}]
The hardness direction is proved as for Proposition \ref{t:ACk-alpha}:
We assume that by definition an \NCkalpha-circuit must satisfy the
hypotheses of the axiom (\ref{e:axVNCK}).

Now we argue that the problem actually belongs to $\NCkalpha$.
Consider the case $k = 2$; other cases are similar.
First, the given problem reduces to the following restriction of it, called $P$,
where the layers in the given circuit are grouped together to form $|w|$
many blocks $B_1,B_2,\ldots, B_{|w|}$,
where each block $B_i$ has exactly $|w|$ layers and $w$ outputs.
Furthermore, each block is an \NCOnealpha\ circuit (with multiple outputs)
such that all $\alpha$-gates appear in the first layer.
Moreover, these \NCOnealpha\ circuits are presented using the extended connection language.
The reduction can be done by uniform circuits of polynomial size, $\log\log n$ depth and unbounded fanin,
where $n$ is the length of the input to our original problem.

It remains to show that the new problem $P$ is solvable by a uniform family of $\NC^2(\alpha)$ circuits.
Note that the input now can be viewed as the sequence 
$$B_1, B_2, \ldots, B_{|w|}$$
where each $B_i$ consists of $w$ single-output \NCOnealpha\ circuits
$$B_{i,1}, B_{i,2}, \ldots, B_{i,w}$$
Here each $B_{i,j}$ is an \NCOnealpha\ circuit where all $\alpha$-gates are on the first layer.
Lemma \ref{t:NCOne-alpha} above shows that each $B_{i,j}$ can be evaluated by an $\NCOnealpha$ circuit $C_{i,j}$.
As a result, the circuits for solving $P$ are obtained by arranging $C_{i,j}$ appropriately.
\end{proof}

The next theorem is proved in the same way as Theorem \ref{t:VACk-def},
using Proposition \ref{t:NCk-alpha}.

\begin{theorem}\label{t:NCk-definable}
For $k \ge 2$, the functions in $\FNCk(\alpha)$ are precisely the
$\SigOneB(\alpha)$-definable functions of $\VNCk(\alpha)$.
\end{theorem}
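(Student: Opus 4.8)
The plan is to replicate the proof of Theorem~\ref{t:VACk-def} with Proposition~\ref{t:NCk-alpha} in place of Proposition~\ref{t:ACk-alpha}. As there, the statement splits into the two inclusions between $\FNCk(\alpha)$ and the class of functions $\SigOneB(\alpha)$-definable in $\VNCkalpha$.

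For ``$\SigOneB(\alpha)$-definable in $\VNCkalpha$ $\Rightarrow$ in $\FNCk(\alpha)$'', I would first note that, by Proposition~\ref{t:NCk-alpha}, the problem of witnessing the string quantifiers $\exists Q\,\exists Y$ in the axiom~(\ref{e:axVNCK}) lies in $\FNCk(\alpha)$; hence that axiom is witnessed by an $\FNCk(\alpha)$ function whose graph is $\SigZB(\alpha)$-representable. One then runs the standard witnessing argument of \cite[Chapter~IX]{Cook:Nguyen}: induct on a free-cut-free $\VNCkalpha$-proof of $\exists!\,Y\,\varphi(\xvec,\Xvec,Y)$ for a $\SigOneB(\alpha)$ formula $\varphi$, noting that the only nonlogical axioms are those of \VZ\ (witnessed in $\FACZ(\alpha)\subseteq\FNCk(\alpha)$) and~(\ref{e:axVNCK}); closure of $\FNCk(\alpha)$ under composition with $\FACZ(\alpha)$ functions and under the bookkeeping of the induction steps then yields $F\in\FNCk(\alpha)$.

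For the converse, a function in $\FNCk(\alpha)$ is computed by a uniform family of $\NCkalpha$-circuits that by fiat satisfy the hypotheses of~(\ref{e:axVNCK}); by the hardness half of Proposition~\ref{t:NCk-alpha} this reduces under \ACZ-reductions to the $\NCkalpha$ circuit-evaluation problem whose witnesses~(\ref{e:axVNCK}) supplies. So it suffices to show that the $\SigOneB(\alpha)$-definable functions of $\VNCkalpha$ are closed under \ACZ-reductions. This again goes by the methods of \cite[Chapter~IX]{Cook:Nguyen}: one uses $\COMP{\SigZB(\alpha)}$ --- provable in $\VNCkalpha$, since its axiom already lets one evaluate the constant-depth, depth-one-oracle circuit describing any $\SigZB(\alpha)$ formula, exactly as in the proof of Proposition~\ref{t:VTC-alpha-alt} --- together with provable induction along the reduction sequence of Definition~\ref{d:reducible} to show that each intermediate function, and finally $F$, is provably total.

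The genuinely delicate point has already been discharged inside Proposition~\ref{t:NCk-alpha} and Lemma~\ref{t:NCOne-alpha}: the naive oracle circuit value problem is \emph{not} \ACZ-complete for $\NCkalpha$ unless the circuit description certifies the fanin-$2$ restriction and the nested oracle depth of every gate, which is why~(\ref{e:axVNCK}) carries the $\Fanin'$ and $\OHeight_{k-1}$ guards and why the $\NCOnealpha$-complete problem of Lemma~\ref{t:NCOne-alpha} uses the extended connection language. Granting that, the only remaining thing to verify here is that the two-sorted proof-theoretic apparatus of \cite[Chapter~IX]{Cook:Nguyen} --- free-cut elimination, the witnessing lemma, and closure of the provably total functions under \ACZ-reductions --- survives adding $\alpha$ (and $\Row$) to the vocabulary; but that transfer was already made for $\VACkalpha$ in Theorem~\ref{t:VACk-def}, so no new difficulty arises.
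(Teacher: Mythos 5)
Your proposal is correct and follows essentially the same route as the paper: the paper proves this theorem exactly by repeating the argument for Theorem~\ref{t:VACk-def}, i.e.\ the membership half of Proposition~\ref{t:NCk-alpha} plus the standard witnessing argument for one inclusion, and the hardness half plus closure of the $\SigOneB(\alpha)$-definable functions of $\VNCkalpha$ under \ACZ-reductions (by the methods of Chapter~IX of \cite{Cook:Nguyen}) for the converse. Your additional remarks about the $\Fanin'$ and $\OHeight_{k-1}$ guards and the transfer of the proof-theoretic machinery to the $\alpha$-extended vocabulary just make explicit what the paper leaves implicit.
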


Now we can apply the separations of the relativized classes obtained in
Section \ref{s:sep} to prove separations of the corresponding
theories.

\begin{corollary}
$\VLalpha \subsetneq \VACOne(\alpha)$,
and for $k\ge 1$:
$$\VACkalpha \subseteq \VNC^{k+1}(\alpha) \subsetneq \VAC^{k+1}(\alpha)$$
\end{corollary}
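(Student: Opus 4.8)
The plan is to derive the corollary by combining the relativized definability theorems (Theorems~\ref{t:def-VNL}, \ref{t:VACk-def}, \ref{t:NCk-definable}) with the oracle separation of Corollary~\ref{c:singlestrict}. Fix $\alpha$ to be the single oracle of Corollary~\ref{c:singlestrict}, so that $\csNLalpha\subsetneq\ACOnealpha$ and $\AC^{k}(\alpha)\subseteq\NC^{k+1}(\alpha)\subsetneq\AC^{k+1}(\alpha)$ for every $k\ge 1$.

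\emph{The inclusions.} For $\VLalpha\subseteq\VACOne(\alpha)$ and $\VACkalpha\subseteq\VNC^{k+1}(\alpha)$ I would invoke the standard principle (the relativized form of the situation in \cite[Chapter~IX]{Cook:Nguyen}) that the theory attached to a class $\boldC$ is contained in the theory attached to $\boldC'$ once the complete problem of $\boldC$ is $\ACZ(\alpha)$-reducible to that of $\boldC'$. All the theories involved prove $\COMP{\SigZBalpha}$ --- for $\VACkalpha$ and $\VNCkalpha$ this already follows from the circuit-value axioms~(\ref{e:VACax}) and (\ref{e:axVNCK}), since a $\SigZB(\alpha)$ relation is computed by a circuit of constant oracle nesting --- so they can carry out $\SigZB(\alpha)$-reasoning and compose $\ACZ(\alpha)$-reductions. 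By Theorems~\ref{t:order-a} and \ref{t:order-b}(i), $\csLalpha$ is the $\ACZ(\alpha)$-closure of $\OneSTCONN$ and $\csLalpha\subseteq\csNLalpha\subseteq\ACOnealpha$; hence the defining axiom $\PATH$ of $\VLalpha$ is provable in $\VACOne(\alpha)$. Likewise, since $\AC^{k}(\alpha)\subseteq\NC^{k+1}(\alpha)$, the $\AC^{k}(\alpha)$ circuit-value axiom~(\ref{e:VACax}) of $\VACkalpha$ is provable in $\VNC^{k+1}(\alpha)$. (Equivalently: each such axiom asserts the totality of a function lying, by the cited inclusions, in $\FNC^{k+1}(\alpha)$, resp.\ in $\FLalpha\subseteq\mathbf{FAC}^1(\alpha)$, hence $\SigOneB(\alpha)$-definable in the larger theory by Theorem~\ref{t:NCk-definable}, resp.\ Theorem~\ref{t:VACk-def}.)

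\emph{The strict separations.} Suppose $\VNC^{k+1}(\alpha)=\VAC^{k+1}(\alpha)$ for some $k\ge 1$. The two theories then have the same $\SigOneB(\alpha)$-definable functions, which by Theorem~\ref{t:NCk-definable} are exactly those of $\FNC^{k+1}(\alpha)$ and by Theorem~\ref{t:VACk-def} exactly those of $\mathbf{FAC}^{k+1}(\alpha)$; thus $\FNC^{k+1}(\alpha)=\mathbf{FAC}^{k+1}(\alpha)$. Since $\NC^{k+1}(\alpha)$ and $\AC^{k+1}(\alpha)$ are closed under \ACZ-reductions, a relation belongs to one of these classes iff its $\{0,1\}$-valued indicator function belongs to the corresponding function class, so $\NC^{k+1}(\alpha)=\AC^{k+1}(\alpha)$, contradicting Corollary~\ref{c:singlestrict}. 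Hence $\VNC^{k+1}(\alpha)\subsetneq\VAC^{k+1}(\alpha)$. The same bookkeeping gives $\VLalpha\subsetneq\VACOne(\alpha)$: equality would force $\FLalpha=\mathbf{FAC}^1(\alpha)$ (Theorems~\ref{t:def-VNL} and \ref{t:VACk-def}), hence $\csLalpha=\ACOnealpha$, contradicting $\csLalpha\subseteq\csNLalpha\subsetneq\ACOnealpha$ (Theorem~\ref{t:order-a} and Corollary~\ref{c:singlestrict}).

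\emph{Where the work is.} There is essentially no obstacle at this stage: all the substance sits in the relativized definability theorems and in the oracle built in Section~\ref{s:sep}. The one point that needs care is the inclusion direction --- confirming that the defining axiom of the weaker theory is a theorem of the stronger one --- which rests on the relativized version of the Cook--Nguyen fact that a theory proves the solvability of everything $\ACZ(\alpha)$-reducible to its complete problem; the preceding sections have already installed exactly this relativized apparatus.
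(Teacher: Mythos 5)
Your proposal is correct and takes essentially the paper's own route: the strictness claims are obtained exactly as in the paper, by combining the definability theorems (Theorems \ref{t:def-VNL}, \ref{t:VACk-def}, \ref{t:NCk-definable}) with the single-oracle separations of Corollary \ref{c:singlestrict}, and the non-strict inclusions are handled, as in the paper, by showing the stronger theory proves the weaker theory's axioms. The only real difference is how the base-theory mismatch for $\VLalpha\subseteq\VACOne(\alpha)$ is bridged: the paper goes through Proposition \ref{t:VTC-alpha-alt} (re-axiomatizing $\VLalpha$ by \VZ\ plus the $\PATH(\alpha)$ scheme, which the $\VACOne(\alpha)$ axiom implies), whereas you derive $\COMP{\SigZBalpha}$ directly from the circuit-evaluation axiom and then appeal to provable $\ACZ(\alpha)$-reductions---both variants rest on the same relativized Cook--Nguyen machinery and suppress the same routine formalization details.
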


\begin{proof}
The first inclusion follows from Proposition \ref{t:VTC-alpha-alt},
and the fact that the axiom $\PATH(\alpha)$ is implied by the
axiom for $\VACOne(\alpha)$.
The remaining inclusions are easy to check, so it suffices to show the 
strictness of the strict inclusions.
By Theorems \ref{t:def-VNL}, \ref{t:VACk-def}, and \ref{t:NCk-definable}
we know that the $\SigOneB(\alpha)$-definable functions in each
theory are those in the corresponding complexity class.  
By Corollary \ref{c:singlestrict} we know that the inclusions of
the corresponding complexity classes are strict, where indicated
in the statement of the corollary.
\end{proof}

\section{Conclusion}
\label{s:conclusion}
The the relativized class $\ACk(\alpha)$, $k\ge 0$, has an obvious
definition:  treat an oracle gate $\alpha(x_1,\ldots,x_n)$ in
the same way as $\wedge$ and $\vee$ gates.  However definitions of the
relativized versions of the classes \NCk, \L, and \NL\
are not so obvious.  Here we give new definitions for these classes
that preserve many of the properties of the unrelativized classes,
namely class inclusions, Savitch's Theorem, and the Immerman-\Sze\ Theorem.
However there is a weakness in our definition of $\csNLalpha$
(relatived \NL), namely the corresponding function class may not
be closed under composition (all other function classes are so closed).
A possible way out is to define relativized \NL\ to be
$\ACZ(\STCONN, \alpha)$ (see Theorem \ref{t:order-b} part (ii) and
its proof).
This class has nice closure properties and satisfies the expected
inclusions with other relativized classes.  It also has a natural
associated relativized theory, namely $\VNLprimealpha$ (see
Definition \ref{d:relTheories}).  But we do not know
how to define $\ACZ(\STCONN, \alpha)$ in terms of nondeterministic
log space oracle Turing machines. 
We leave this conundrum as an open problem.

We note that the first author has carried out in \cite{Aehlig:Hab}
a detailed study of propositional versions of our relativized theories.


\bibliographystyle{alpha}
\bibliography{ntp}

\begin{thebibliography}{ACN07}

\bibitem[ACN07]{ACN:csl}
Klaus Aehlig, Stephen Cook, and Phuong Nguyen.
\newblock {Relativizing Small Complexity Classes and their Theories}.
\newblock In {\em 16th EACSL Annual Conference on Computer Science and Logic},
  pages 374--388. Springer, 2007.
\newblock LNCS 4646.

\bibitem[Aeh10]{Aehlig:Hab}
Klaus Aehlig.
\newblock {\em {Parallel Time and Proof Complexity}}.
\newblock 2010.
\newblock Habilitation, Ludwig-Maximilians-University, Munich.

\bibitem[BIS90]{Barrington:Immerman:Straubing:90}
David A.~Mix Barrington, Neil Immerman, and Howard Straubing.
\newblock {On Uniformity within $\mathbf{NC}^1$}.
\newblock {\em Journal of Computer and System Sciences}, 41(3):274--306, 1990.

\bibitem[Bus86]{Buss:86:sctc}
Jonathan Buss.
\newblock {Relativized Alternation}.
\newblock In {\em {Proceedings, Structure in Complexity Theory Conference}}.
  Springer-Verlag, 1986.
\newblock {Lecture Notes in Computer Science Vol. 223}.

\bibitem[Bus87]{Buss:87:stoc}
Samuel Buss.
\newblock {The Boolean formula value problem is in $\mathbf{Alogtime}$}.
\newblock In {\em {Proceedings of the 19th Annual ACM Symposium on Theory of
  Computing}}, pages 123--131, 1987.

\bibitem[CN10]{Cook:Nguyen}
Stephen Cook and Phuong Nguyen.
\newblock {\em {Logical Foundations of Proof Complexity}}.
\newblock ASL Perspectives in Logic Series. Cambridge University Press, 2010.

\bibitem[Coo85]{Cook:85:information-control}
Stephen Cook.
\newblock {A Taxonomy of Problems with Fast Parallel Algorithms}.
\newblock {\em Information and Control}, 64(1-3):2--22, 1985.

\bibitem[Imm99]{Immerman:99:book}
Neil Immerman.
\newblock {\em {Descriptive Complexity}}.
\newblock Springer, 1999.

\bibitem[Kra95]{Krajicek:95:book}
Jan Kraj{\'{\i}}{\v{c}}ek.
\newblock {\em {Bounded Arithmetic, Propositional Logic, and Complexity
  Theory}}.
\newblock Cambridge University Press, 1995.

\bibitem[LL76]{Ladner:Lynch:76}
Richard Ladner and Nancy Lynch.
\newblock Relativization of questions about log space computability.
\newblock {\em Mathematical Systems Theory}, 10:19--32, 1976.

\bibitem[NC05]{Nguyen:Cook:05:lmcs}
Phuong Nguyen and Stephen Cook.
\newblock {Theory for \TCZ\ and Other Small Complexity Classes}.
\newblock {\em Logical Methods in Computer Science}, 2(1), 2005.

\bibitem[Orp83]{Orponen:83}
P.~Orponen.
\newblock {General Nonrelativizability Results for Parallel Models of
  Computation}.
\newblock In {\em Proceedings, Winter School in Theoretical Computer Science},
  pages 194--205. 1983.

\bibitem[RST84]{Ruzzo:Simon:Tompa}
Walter Ruzzo, Janos Simon, and Martin Tompa.
\newblock {Space-Bounded Hierarchies and Probabilistic Computations}.
\newblock {\em Journal of Computer and System Sciences}, 28(2):216--230, 1984.

\bibitem[Ruz81]{Ruzzo:81:jcss}
Walter Ruzzo.
\newblock {On Uniform Circuit Complexity}.
\newblock {\em Journal of Computer and System Sciences}, 22:365--383, 1981.

\bibitem[Sim77]{Simon:77:thesis}
Istvan Simon.
\newblock {\em On some subrecursive reducibilities.}
\newblock PhD thesis, Stanford University, 1977.

\bibitem[Tak95]{Takeuti:95:apal}
Gaisi Takeuti.
\newblock {Separations of Theories in Weak Bounded Arithmetic}.
\newblock {\em Annals of Pure and Applied Logic}, 71:47--67, 1995.

\bibitem[Wil87]{Wilson:87:mst}
Christopher Wilson.
\newblock {Relativized \NC}.
\newblock {\em Mathematical Systems Theory}, 20:13--29, 1987.

\bibitem[Wil88]{Wilson:88:jcss}
Christopher Wilson.
\newblock {A Measure of Relativized Space Which Is Faithful with Respect to
  Depth}.
\newblock {\em Journal of Computer and System Sciences}, 36:303--312, 1988.

\bibitem[Zam96]{Zambella:96:jsl}
Domenico Zambella.
\newblock {Notes on Polynomially Bounded Arithmetic}.
\newblock {\em Journal of Symbolic Logic}, 61(3):942--966, 1996.

\end{thebibliography}



\end{document}